\pgfplotsset{compat=newest}
\DeclareMathAlphabet{\mymathbb}{U}{BOONDOX-ds}{m}{n}
\newcommand{\symi}{h}
\newcommand{\symk}{k}
\newcommand{\symt}{t}
\newcommand{\symkit}{k}
\newcommand{\A}{\bb{V}}
\newcommand{\imj}{j}
\newcommand{\R}{\mathbb{R}}
\newcommand{\C}{\mathbb{C}}
\newcommand{\bb}[1]{\mymathbb{#1}}
\newcommand{\mc}[1]{\mathcal{#1}}
\newcommand{\mt}[1]{\textrm{#1}}
\newcommand{\lt}{\left}
\newcommand{\rt}{\right}
\newcommand{\beeq}{\begin{equation}}
\newcommand{\eneq}{\end{equation}}
\newcommand{\matb}{\begin{matrix}}
\newcommand{\mate}{\end{matrix}}
\newcommand{\rvline}{\hspace*{-\arraycolsep}\vline\hspace*{-\arraycolsep}}
\newcommand{\norm}[1]{\left\lVert#1\right\rVert}
\DeclareMathOperator*{\argmin}{arg\,min}
\DeclareMathOperator*{\vect}{\mt{vec}}
\DeclareMathOperator*{\ve}{\mt{ve}}
\DeclareMathOperator*{\var}{Var}
\DeclareMathOperator*{\cov}{Cov}
\DeclareMathOperator*{\E}{E}
\newcounter{thms}
\newtheorem{corollary}[thms]{Corollary}
\newtheorem{lemma}[thms]{Lemma}
\newtheorem{assumption}[thms]{Assumption}
\theoremstyle{definition}
\theoremstyle{plain}
\newtheorem*{remark}{Remark}
\title{Bayesian Error-in-Variables Models for the Identification of Power Networks}
\date{\nth{13} December 2021}
\author{ 
    \href{https://orcid.org/0000-0002-3474-1689}{\includegraphics[scale=0.06]{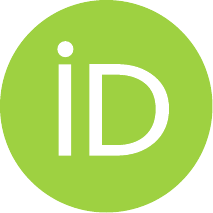}\hspace{1mm}Jean-Sébastien Brouillon} \\
	Institute of Mechanical Engineering,\\
	\'Ecole Polytechnique F\'ed\'erale de Lausanne (EPFL),\\
	CH-1015 Lausanne, Switzerland \\
	\texttt{jean-sebastien.brouillon@epfl.ch} \\
	\And
    \href{https://orcid.org/0000-0003-3357-4679}{\includegraphics[scale=0.06]{orcid.pdf}\hspace{1mm}Emanuele Fabbiani} \\
	Identification and Control of \\Dynamic Systems Laboratory,\\
	University of Pavia,\\
	Pavia, Italy \\
	\texttt{emanuele.fabbiani01@universitadipavia.it} \\
	\And
    \href{https://orcid.org/0000-0002-3505-7731}{\includegraphics[scale=0.06]{orcid.pdf}\hspace{1mm}Pulkit Nahata} \\
	Institute of Mechanical Engineering,\\
	\'Ecole Polytechnique F\'ed\'erale de Lausanne (EPFL),\\
	CH-1015 Lausanne, Switzerland \\
	\texttt{pulkit.nahata@epfl.ch} \\
	\And
    \href{https://orcid.org/0000-0003-2721-2161}{\includegraphics[scale=0.06]{orcid.pdf}\hspace{1mm}Keith Moffat} \\
	Department of Electrical Engineering \\and Computer Science,\\
	UC Berkeley,\\
	Berkeley, USA \\
	\texttt{keithm@berkeley.edu} \\
	\And
    \href{https://orcid.org/0000-0002-9649-5305}{\includegraphics[scale=0.06]{orcid.pdf}\hspace{1mm}Florian Dörfler} \\
	Automatic Control Laboratory,\\
	Swiss Federal Institute of Technology (ETH),\\
	Zurich, Switzerland \\
	\texttt{dorfler@control.ee.ethz.ch} \\
	\And
    \href{https://orcid.org/0000-0002-9492-9624}{\includegraphics[scale=0.06]{orcid.pdf}\hspace{1mm}Giancarlo Ferrari-Trecate} \\
	Institute of Mechanical Engineering,\\
	\'Ecole Polytechnique F\'ed\'erale de Lausanne (EPFL),\\
	CH-1015 Lausanne, Switzerland \\
	\texttt{giancarlo.ferraritrecate@epfl.ch} \\
}
\begin{document}
\maketitle

\begin{abstract}
	The increasing integration of intermittent renewable generation, especially at the distribution level, necessitates advanced planning and optimisation methodologies contingent on the knowledge of the grid, specifically the admittance matrix capturing the topology and line parameters of an electric network. However, a reliable estimate of the admittance matrix may either be missing or quickly become obsolete for temporally varying grids. In this work, we propose a data-driven identification method utilising voltage and current measurements collected from micro-PMUs. More precisely, we first present a maximum likelihood approach and then move towards a Bayesian framework, leveraging the principles of maximum a posteriori estimation. In contrast with most existing contributions, our approach not only factors in measurement noise on both voltage and current data, but is also capable of exploiting available a priori information such as sparsity patterns and known line parameters. Simulations conducted on benchmark cases demonstrate that, compared to other algorithms, our method can achieve significantly greater accuracy. 
\end{abstract}

% keywords can be removed
\keywords{Bayesian inference \and Distribution grids \and Error-in-variables \and Line admittance estimation \and Power systems identification}

\section{Introduction} \label{section_intro}

A major key to realising green energy systems is the large-scale integration of renewable energy sources (RESs) in the distribution grid. Nevertheless, RES proliferation leads to additional risks such as reverse power flows and over-voltage---especially during periods of peak generation and low consumption \cite{Weng1}. Distribution grid operators are consequently required to put in place intelligent monitoring and control algorithms in order to maintain the existing levels of grid safety and reliability \cite{Schenato,LaBella2018,Iovine2019,Parisio2016}.  Deploying such algorithms efficiently requires the knowledge of the topology and the line parameters of the grid, embedded in its admittance matrix.

An exact estimate of the admittance matrix is hard to obtain for distribution grids, in particular as topological information and line parameter values either are unavailable for large chunks of the network, or become obsolete in the event of a topology change. To circumvent this issue, many recent contributions work out an up-to-date admittance matrix estimate by utilising data collected from micro-phasor measurement units ($\mu$PMUs). Although a more recent development than PMUs, which are commonly deployed in transmission systems, $\mu$PMUs have already been installed in distribution grids across America, Asia, and Europe, and their penetration is expected to steadily increase in the coming years \cite{kumar2020micro}. 

Due to its increasing relevance, the problem of identifying the topology and line parameters of a power grid has attracted considerable attention in the last few years. In \cite{cavraro2018graph, cavraro2019inverter}, an approach based on inverter probing is explored. Both works, besides employing approximate linearized power-flow equations, are restricted to radial networks. Albeit requiring voltage and current (or power) measurements at each bus of the grid, identification methods in \cite{du2019optimal, scaglione2017, yuan2016inverse, online_rls, tomlin} can be applied to both radial and meshed structures. In \cite{online_rls, tomlin}, structural properties of the admittance matrix, such as symmetry and Laplacianity are used to eliminate redundant admittance matrix parameters. Moreover, \cite{tomlin} proposes an adaptive Lasso algorithm promoting sparsity.

The tradeoff between voltage stability and the energy of the signal is the main challenge for the identification of power systems. To compensate for the lack of signal, online design-of-experiment procedures are presented in \cite{du2019optimal,scaglione2017}; nonetheless, the proposed algorithms require control authority on the state of the grid, and additional measurements of line power flows. They also neglect the structural and sparsity properties of the admittance matrix and stay limited by the small acceptable voltage variations.

All the foregoing works suffer from two limitations. First, they either completely disregard measurement errors or assume errors solely on certain measurements. This creates an estimation bias, for $\mu$PMUs introduce an measurement noise on all electric variables \cite{pinte2015low, sarri2016methods}. Second, they do not capitalize on grid information which may already be available \textit{a priori}, for instance sparsity patterns and known network sections and line parameters. To do away with the first limitation, \cite{patopa,wehenkel2020parameter} introduce error-in-variable (EIV) models taking into consideration all sources of measurement errors. That notwithstanding, they leave aside all prior information, including structural properties of the admittance matrix, which can potentially improve grid identification.

In this paper, we address the limitations of existing works by putting forth a novel Bayesian grid identification framework, which incorporates EIV models with an unbiased estimation of the error on both voltage and current data, and takes advantage of the principles of maximum likelihood estimation (MLE). Our approach exploits not only the inherent structural properties of the admittance matrix, but enables to exploit grid information known \textit{a priori}. In particular, we describe how to incorporate in the identification algorithm different pieces of information which may be available to grid operators, partly through data-driven Bayesian priors.

In order to substantiate the efficacy of our method, we conduct simulations on a large network with realistic voltage and current profiles, and $\mu$PMU noise levels compatible with the accuracy of actual commercial devices. We then compare the performance of our proposed methods with other grid identification procedures proposed in the literature. Our analysis shows not only that EIV models are needed to obtain reasonable grid estimates, but also that sparsity needs to be enforced if the topology of the network is unknown. Moreover, it substantiates the value of injecting prior information into the estimation algorithm, in case it is available.

The paper proceeds as follows: \cref{section_model,section_problem,subsec_noise_carac} define the identification problem, which we first solve using likelihood maximization in \cref{section_methods}. \cref{section_estimation,subsec_structural} introduce prior knowledge using the Bayesian framework. \cref{section_numeric} provides numerical methods for solving the optimization problem, \cref{section_results} presents a realistic simulation, which is further discussed in \cref{section_discussion}. \cref{section_conclu} concludes the paper and proposed future developments.

\subsection{Preliminaries and Notation}
\label{subsec:notation}
Let $\imj =\sqrt{-1}$ denote the imaginary unit. For $x \in \bb{C}^n$, $\overline{x}$ is its complex conjugate and $|x|$ its magnitude, both taken element-wise. The diagonal matrix of order $n$ associated with $x$ is denoted with $[x]$. The $\ell_1$ and $\ell_2$ norms of a vector $x$ are represented by $\norm{x}_1$ and $\norm{x}_2$, respectively. Throughout, $\bb{1}_n$ and $\bb{0}_n$ are $n$-dimensional vectors of all ones and zeros, whereas $\mathcal{I}_n$ and $\mathcal{O}_{n \times m}$ represent $n$-by-$n$ identity and $m$-by-$n$ zero matrices, respectively. The unit vector $e_i,~i= 1,..., n$ is the $i^{th}$ column of $\mathcal{I}_n$. For an $(m,n)$ matrix $A$, $A^\top$ denotes its transpose, $A_{i\cdot}$ its $i^{th}$ row vector, and $\vect(A) =[A_{1\cdot}^\top \cdots A_{n\cdot}^\top]^\top$ the $mn$-dimensional stacked column vector. Given a square matrix $A$, $\ve(A)$ is the $n(n-1)/2$-dimensional vector obtained by removing diagonal and supra-diagonal elements from $-\vect(A)$. The Kronecker product between matrices $A$ and $B$ is $A \otimes B$. Given $n$ elements $x_n$, $[x_i]_{i=1}^n$ is the stacked column vector $[x_{1}^\top \cdots x_{n}^\top]^\top$.

\textit{Random variables.} $\mathcal{X} \sim \mathcal{N}(\mu, \sigma^2)$ denotes a Gaussian random variable with expected value $\E[\mathcal{X}]=\mu$ and variance $\var[\mathcal{X}] = \sigma^2$.

\textit{Algebraic graph theory.} We denote by $\mathcal{G}(\mathcal{V},\mathcal{E},\mathcal{W})$ an undirected, weighted, and \textit{connected} graph, where $\mathcal{V}$ is the node set of cardinality $n$, $\mathcal{E} \subseteq (\mathcal{V}\times\mathcal{V})$ the edge set, and $\mathcal{W}$ collects the edge weights. The adjacency matrix $W \in \bb{C}^{n \times n}$ has elements $w_{\symi\symk}$ corresponding to the weight of the edges $(\symi,\symk) \in \mathcal{E}$ and zero otherwise. The matrix $L= [W\, \bb{1}_{n}]- W$, $L\in \bb{C}^{n \times n}$, is the Laplacian matrix associated with $\mathcal{G}$. By definition, a Laplacian matrix is symmetric and such that $L\bb{1}_n = \bb{0}_n$.

\section{Grid model and data collection} \label{section_model}
\subsection{Power grid model} \label{subsec_grid_model}
An electric distribution network is modeled as an undirected, weighted, and connected graph $\mathcal{G}(\mathcal{V},\mathcal{E},\mathcal{W})$, where the nodes in $\mathcal{V} = \{1, 2, \dots, n\}$ represent buses, either generators or loads, and edges represent power lines, each connecting two distinct buses and modeled after the standard lumped $\pi-$model \cite{wood2013power}. To each edge $(\symi, \symk) \in \mathcal{E}$ we associate a complex weight equal to the line admittance  $y_{\symi\symk}=g_{\symi\symk}+\imj b_{\symi\symk}$, where $g_{\symi\symk} > 0$ is the line conductance and $b_{\symi\symk} \in \R$ the line susceptance. 

The network is then completely represented by the $n$-by-$n$ complex admittance matrix $Y$, with elements $Y_{\symi\symk}=-y_{\symi\symk}$ for $\symi \neq \symk$ and $Y_{\symi\symk}=\sum_{\symi=1,\symi\neq \symk}^{n}y_{\symi\symk} +y_{s,\symi}$, where $y_{s,\symi}$ is the shunt element at the $\symi^{th}$ bus. The admittance matrix $Y$ is symmetric and typically sparse, as each bus is connected to few others: notably, this is the case in distribution grids, which are often characterized by a radial topology. Moreover, for network where shunt elements are negligible, $Y$ is Laplacian \cite{kundur2007power}.

% Due to the structure of power networks, $Y$ has peculiar properties. First, $Y$ is typically sparse, as each bus is connected to few others: notably, this is the case in distribution grids, which are often characterized by a radial topology. Second, if the network does not include phase-shifting transformers and power lines are not compensated by series capacitors, $Y$ is symmetric. Third, for network where shunt elements are negligible, $Y$ is Laplacian \cite{kundur2007power}.

% Since phase-shifting transformers are usually employed in transmission systems, and shunt admittances are negligible for medium-sized grids, with line less than 60 km long, it is safe to assume that standard distribution networks verify the following assumption \cite{taleb2006performance}. 

% \begin{assumption} \label{ass:structure-of-y}
% The admittance matrix $Y$ is Laplacian.
% \end{assumption}

We consider the network to be either single-phased or phase-balanced, and operating in sinusoidal regime. To each bus $h \in \mathcal{V}$, we associate a phasor voltage $v_{\symi}e^{\imj \theta_{\symi}}\in \C$, where $v_{\symi} > 0$ is the voltage magnitude and $\theta_{\symi} \in \R$ the voltage angle, and a phasor current $i_{\symi}e^{\imj \phi_{\symi}}\in \C$, representing the injection at the bus. We do not assume the presence of a point of common coupling (PCC), although one or more may be present as long as their fixed $v_0$ and $\theta_0$ are known. The current-voltage relation descending directly from Kirchhoff's and Ohm's laws is given by
\begin{equation}
	i = Yv,
	\label{eq:current-voltage}
\end{equation}
where $i \in \C^{n}$ is the vector of nodal current injections, and $v \in \C^{n}$ the vector of nodal voltages \cite{dorfler2018electrical}.

\subsection{Data collection} \label{subsec_data_collection}
We assume that each bus of interest is equipped with $\mu$PMU, while we do not require electrical variables to be measured on the lines.

\begin{assumption} \label{ass:observability}
The network is either completely observable, that is current injections and voltages are measured at each node, or a reduced network between the observed nodes is identified.
\end{assumption}

In transmission systems, where the reliability and the economic optimization of dispatch are primary concerns, 
synchronized phasor measurements are provided by PMUs, which sample the magnitude and phase of current and voltage phasors. Unfortunately, distribution networks are characterized by relatively small flows of active power and predominantly resistive lines, resulting in small phase differences between nodes. The accuracy of standard PMUs is not high enough to reliably appreciate such differences, making the devices ineffective.

%Until the beginning of the 2010s, the distribution grid featured only one-way power flows, directed from the point of common coupling with the main grid towards the nodes. Such architecture does not require substantial monitoring nor control. However, in the last decade, the rise of distributed generation and storage called for advanced control strategies, which hinge on accurate measurements. Metering devices designed for the distribution grid were then developed. 

Micro-syncrophasors improve the resolution and accuracy of PMUs by up to two orders of magnitude - see \cref{tab:pmu-error}, while preserving low costs. As PMUs, these devices sample the magnitude and phase of current and voltage: state-of-the-art models achieve frequency up to 120 Hz \cite{von2017precision}. While the fast sampling rate generates a large number of data points in a short time span, the samples are highly correlated. Indeed, the characteristic frequencies of load variations are much lower than the sampling rate of $\mu$PMUs. Moreover, due to structure of the network and the low phase difference between nodes, samples collected on different buses are highly correlated with each other \cite{tomlin}.

\begin{table}
	\centering
	\begin{tabular}{l c c c}
		\hline 
		Metric & class 1 PMU & class 0.1 PMU & $\mu$PMU \\
		\hline 
		Magnitude accuracy [\%] & 1\% & 0.1\% & 0.03\% \\
		Phase accuracy [rad] & $12 \cdot 10^{-3}$ & $1.5 \cdot 10^{-3}$ & $5.1 \cdot 10^{-4}$ \\
		\hline
	\end{tabular}
	\caption{PMU and $\mu$PMU 99th percentile accuracy \cite{sarri2016methods,von2017precision}. Magnitude accuracy is a reported as a percentage of the rated value.}
	\label{tab:pmu-error}
\end{table}

\section{Problem Statement}\label{section_problem}
Consider a power distribution network as described in \cref{section_model}, fulfilling Assumption \ref{ass:observability}. The identification problem amounts to reconstruct the admittance matrix from a sequence of voltage and current measurements corresponding to different steady states of the system \cite{yuan2016inverse, tomlin}.

Let $N$ be the number of samples, and let $v_\symt \in \bb{C}^n$ and $i_\symt \in \bb{C}^n$ be the vectors of current injections and voltages for $\symt = 1, \dots, N$. From \eqref{eq:current-voltage}, one can obtain
\begin{equation}
	\label{eq:current-voltage-samples}
	I = VY,
\end{equation}
where $V = [v_1, v_2, \dots, v_N]^\top \in \bb{C}^{N \times n}$, and $I = [i_1, i_2, \dots, i_N]^\top \in \bb{C}^{N \times n}$.

As described in \cref{subsec_data_collection}, the available current and voltage phasors are corrupted by measurement noise. Therefore, in place of the actual electrical variables $V$ and $I$, only noisy samples $\Tilde{V}$ and $\Tilde{I}$ are available, where
\begin{subequations}
\label{eq:noisy-and-actual-matrices}
\begin{align}
    \Tilde{V} &= V + \Delta V, \\
    \Tilde{I} &= I + \Delta I,
\end{align}
\end{subequations}
with $\Delta V \in \bb{C}^{N \times n}$ and $\Delta I \in \bb{C}^{N \times n}$ denoting the complex measurement noise. The network identification problem then translates into the estimation of $Y$ given $\Tilde V$ and $\Tilde I$.

\begin{assumption} \label{ass:constantness}
The admittance matrix is constant over the identification period.
\end{assumption}

In order to estimate $Y$ using the linear relationship \eqref{eq:current-voltage-samples}, it must be constant over both time and the obserbed variables. If not, the closest constant matrix will be identified instead, and the variations will add uncertainty to the estimate.

\section{Noise Model} 
\label{subsec_noise_carac}
By design, PMUs and $\mu$PMUs collect current and voltage measurements in polar coordinates, that is in terms of magnitude and phase \cite[Sec. 3.2]{sarri2016methods}. Previous studies have shown, with both theoretical and empirical arguments, that the measurement noise is approximately Gaussian in polar coordinates, with zero mean and constant variance \cite[Sec. 2.1]{sarri2016methods}. 

In \eqref{eq:current-voltage}, the admittance matrix $Y$ establishes a linear relationship between the real and the imaginary parts of $i$ and $v$, but the equation becomes non-linear if magnitude and phase are considered. To preserve linearity, one needs to transform the measurements and their associated noise from polar to Cartesian coordinates. This transformation changes the statistical distribution of the noise, which becomes non-Gaussian, with a non-zero mean, and with a time-varying, non-diagonal covariance matrix, as shown hereafter.

We consider a generic phasor measured by a $\mu$PMU. Without loss of generality, we will discuss only the case of a voltage phasor; the same arguments apply to currents. Let $\Tilde{v}$ and $\Tilde{\theta}$ denote the measured magnitude and phase, and let $v$ and $\theta$ be the noiseless (unobservable) variables. Then $\Tilde{v} = v + \epsilon$ and $\Tilde{\theta} = \theta + \delta$, where $\epsilon \sim \mathcal{N}(0, \sigma_\epsilon)$ and $\delta \sim \mathcal{N}(0, \sigma_\delta)$ are zero-mean Gaussian variables. Previous studies suggest that the following Assumption is usually satisfied \cite[Sec. 2.1]{sarri2016methods}.
\begin{assumption}\label{ass_independence_noise}
The samples taken at two different time steps on the same node, and the samples taken on two different nodes at the same time are independent.
\end{assumption}

Using Assumption \ref{ass_independence_noise}, we can focus on a single sample. Our aim is to write the measured phasor $\Tilde{v}e^{\imj \Tilde{\theta}}$ in Cartesian coordinate ($\Tilde{c} + \imj \Tilde{d}$). Let $v e^{j\theta} = c+jd$. Highlighting the noise component $\Delta c + \imj \Delta d$ gives $\Tilde{v}e^{\imj \Tilde{\theta}}= (c + \Delta c) + \imj (d + \Delta d)$, where
\begin{subequations}
\label{eq:noise-exact}
    \begin{align}
        \Delta c &= \Tilde{c} - c = v\cos\theta(\cos\delta - 1) - v\sin\theta\sin\delta + \epsilon\cos\theta\cos\delta - \epsilon\sin\theta\sin\delta, \\
        \Delta d &= \Tilde{d} - d = v\sin\theta(\cos\delta - 1) + v\cos\theta\sin\delta + \epsilon\sin\theta\cos\delta + \epsilon\cos\theta\sin\delta.
    \end{align}
\end{subequations}
From \eqref{eq:noise-exact}, it can be noted that $\Delta c$ and $\Delta d$ are not distributed as Gaussian variables, due to interaction terms like $\epsilon\sin\delta$. However, the first-order Taylor approximation of \eqref{eq:noise-exact} is a linear combination of Gaussian variables, suggesting that, for low noise levels, the distribution of $\Delta c$ and $\Delta d$ is closely approximated by a Gaussian variable. 
Similarly to \cite{wehenkel2020parameter}, to characterize the noise we compute the expected value of $\Delta c$ and $\Delta d$
\begin{subequations}
\label{eq:noise-exact-average}
    \begin{align}
        \E[\Delta c] &= v\cos\theta(e^{-\sigma^2_\delta/2}-1),\\
        \E[\Delta d] &= v\sin\theta(e^{-\sigma^2_\delta/2}-1),
    \end{align}
\end{subequations}
and the associated variance-covariance terms:
\begin{subequations}
\label{eq:noise-exact-variance}
    \begin{align}
        \var[\Delta c] &= v^2e^{-\sigma_\delta^2}[\cos^2\theta(\cosh\sigma_\delta^2-1)+\sin^2\theta\sinh{\sigma_\delta^2}] + \sigma_\epsilon^2e^{-\sigma_\delta^2}[\cos^2\theta\cosh\sigma_\delta^2 +\sin^2\theta\sinh{\sigma_\delta^2}], \\
        \var[\Delta d] &= v^2e^{-\sigma_\delta^2}[\sin^2\theta(\cosh\sigma_\delta^2-1)+\cos^2\theta\sinh{\sigma_\delta^2}] + \sigma_\epsilon^2e^{-\sigma_\delta^2}[\sin^2\theta\cosh\sigma_\delta^2+\cos^2\theta\sinh{\sigma_\delta^2}], \\
        \hspace{-6pt} \cov[\Delta c, \Delta d] &= \sin\theta\cos\theta e^{-2\sigma_\delta^2}[\sigma_\epsilon^2 + v^2(1-e^{\sigma_\delta^2})]. \label{eq:noise-exact-variance-cov}
    \end{align}
\end{subequations}
Unfortunately, the expressions in \cref{eq:noise-exact-average,eq:noise-exact-variance} are of no practical use, as they rely on the actual unobservable values $v$ and $\theta$. Similar issues arise in the context of state estimation based on Extended Kalman Filter (EKF): in particular, research on the filtering of radar signal can be adapted to our case \cite{julier2004unscented, longbin1998unbiased, duan2004comments, lerro1993tracking, sarri2016methods}. Following such developments, we compute the expectation of the average \eqref{eq:noise-exact-average} and the variance \eqref{eq:noise-exact-variance} conditioned on the measurements:
\begin{subequations}
\label{eq:average-true-bias}
    \begin{align}
        \E[\Delta c | \Tilde{v}, \Tilde{\theta}] &= \Tilde{v}\cos\Tilde{\theta}(e^{-\sigma^2_\delta}-e^{-\sigma^2_\delta/2}),\\
        \E[\Delta d | \Tilde{v}, \Tilde{\theta}] &= \Tilde{v}\sin\Tilde{\theta}(e^{-\sigma^2_\delta}-e^{-\sigma^2_\delta/2}).
    \end{align}
\end{subequations}
The same procedure can be applied to the variances:
\begin{subequations}
\label{eq:average-true-variance}
    \begin{align}
        \mt{Var}[\Delta c | \Tilde{v}, \Tilde{\theta}] &= \Tilde{v}^2e^{-2\sigma_\delta^2}[\cos^2\Tilde{\theta}(\cosh2\sigma_\delta^2-\cosh\sigma_\delta^2)
        +\sin^2\Tilde{\theta}(\sinh2\sigma_\delta^2-\sinh\sigma_\delta^2)] + \nonumber \\
        &+ \sigma_\epsilon^2e^{-2\sigma_\delta^2}[\cos^2\Tilde{\theta}(2\cosh2\sigma_\delta^2-\cosh\sigma_\delta^2)+\sin^2\Tilde{\theta}(2\sinh2\sigma_\delta^2-\sinh\sigma_\delta^2)], \\
        \mt{Var}[\Delta d | \Tilde{v}, \Tilde{\theta}] &= \Tilde{v}^2e^{-2\sigma_\delta^2}[\sin^2\Tilde{\theta}(\cosh2\sigma_\delta^2-\cosh\sigma_\delta^2)    +\cos^2\Tilde{\theta}(\sinh2\sigma_\delta^2-\sinh\sigma_\delta^2)] + \nonumber \\
        &+ \sigma_\epsilon^2e^{-2\sigma_\delta^2}[\sin^2\Tilde{\theta}(2\cosh2\sigma_\delta^2-\cosh\sigma_\delta^2) +\cos^2\Tilde{\theta}(2\sinh2\sigma_\delta^2-\sinh\sigma_\delta^2)], \\
        \mt{Cov}[\Delta c, \Delta d | \tilde{v}, \Tilde{\theta}] &= \sin\Tilde{\theta}\cos\Tilde{\theta} e^{-4\sigma_\delta^2}[\sigma_\epsilon^2 + (\Tilde{v}^2 + \sigma_\epsilon^2)(1-e^{\sigma_\delta^2})]. \label{eq:average-true-variance-cov}
    \end{align}
\end{subequations}

\cref{eq:average-true-bias} suggests that the measurement in Cartesian coordinates are biased, as the noise has a non-zero average. However, the arguments in Appendix \ref{appendix_bias} suggest that such bias is negligible for realistic noise levels. Moreover, it can always be computed and substracted from the data. Hence, in the following, the noise will be considered unbiased.

We finally model the noise on a phasor measurement as
\begin{equation} \label{eq:single-sample-covariance}
    \begin{bmatrix}
    \Delta c \\
    \Delta d
    \end{bmatrix} \sim \mathcal{N}\left(\bb{0}_2, \Sigma \right),
\end{equation}
with the elements of $\Sigma$ defined by \eqref{eq:average-true-variance}. The covariance matrix $\Sigma$ is not constant in time, but changes with the actual values of phase and magnitude: such property will be further discussed in \cref{section_methods}, while presenting the estimation methods. 

\section{Frequentist identification} \label{section_methods}
\subsection{Least squares} \label{subsec_methods}
From \eqref{eq:current-voltage-samples} and \eqref{eq:noisy-and-actual-matrices}, noisy data are related by the model
\begin{align}
    \label{eq:eiv_model}
    \tilde{I} - \Delta I = (\tilde{V} - \Delta V)Y.
\end{align}
For reconstructing $Y$, the works \cite{tomlin,online_rls} assume $\Delta V=0$ (i.e. $\tilde V = V$) and use Ordinary Least Squares (OLS) as well as its recursive and regularized variants. For standard OLS, one obtains
\begin{align}\label{def_ols}
    \hat Y_{\mt{OLS}} = \argmin_{\hat Y} \|\tilde I - \tilde V \hat Y\|_F^2.
\end{align}
However, if $\Delta V \neq 0$, the OLS introduces a bias. In this case, the Total Least Squares (TLS) is an unbiased estimator \cite{markovsky_tls_overview}: 
\begin{align}\label{def_tls}
    \hat Y_{\mt{TLS}} = \argmin_{\hat Y}\min_{I,V} \|[\tilde V - V,\tilde I-I]\|_F^2 \quad \mt{s.t.} \; I = V \hat Y.
\end{align}
Closed-form solutions for both estimators are well-known, and can be written column-wise, with centered data $\tilde V_c$ and $\tilde{I_c}$, as
\begin{subequations}
\begin{align}\label{def_tls_sol}
    \hat Y_{i,\mt{OLS}} &= (\tilde V_c^\top \tilde V_c)^{-1} \tilde V_c^\top \tilde I_{i,c}, \\
    \hat Y_{i,\mt{TLS}} &= (\tilde V_c^\top \tilde V_c - \sigma_{n+1}^2 \mc I)^{-1} \tilde V_c^\top \tilde I_{i,c},
\end{align}
\end{subequations}
where $\sigma_{n+1}$ is the smallest singular value of $[V_c, I_{i,c}]$. The error covariance, computed in terms of the exact values $V$ \cite{huffel_tls_cov}, is
\begin{subequations}\label{def_ols_tls_cov}
\begin{align}\label{def_ols_cov}
    \mt{Var}[\hat Y_{i,\mt{OLS}}] &\approx \frac{\sigma_{n+1}^2}{N}(V_c^\top V_c)^{-1}, \\
    \mt{Var}[\hat Y_{i,\mt{TLS}}] &\approx (1 + \|\hat Y_{i,TLS}\|^2)\frac{\sigma_{n+1}^2}{N}(V_c^\top V_c)^{-1}.\label{def_tls_cov}
\end{align}
\end{subequations}
It appears that $Y_{i,\mt{OLS}}$ has a smaller variance than $Y_{i,\mt{TLS}}$, but has a bias that grows with $\sigma_{n+1}$. In power systems, the data covariance $(V_c^\top V_c)^{-1}$ is very small compared to the noise variance (approximated by $\sigma_{n+1}$). Hence, the bias of the OLS can be very large, making the TLS a more suitable choice. However, the estimation of a sparse topology can be difficult, because the TLS estimate of zero elements can be very large due to the large variance. Possible solutions for this problem are regularization or more complex iterative methods such as the one presented in \cite{patopa}. Both methods assume that the samples are independent and identically distributed (i.i.d). However, as shown in section \ref{subsec_noise_carac}, µPMU measurements are not identically distributed.

\subsection{Maximum likelihood estimator} \label{subsec_mle_tls}

The high correlation between measurements observed in power systems suggest that large sample sizes $N$ may be needed in order to obtain a good estimate of $Y$. Moreover, the estimator should be unbiased and consistent \cite[chapter 7, 10]{statistical_inference_book}. The Maximum Likelihood Estimator (MLE) presents weak consistency conditions \cite{mle_consistency} that are satisfied for linear models. 
From \eqref{eq:current-voltage-samples} and \eqref{eq:noisy-and-actual-matrices}, noisy data are related by the model
\begin{align}
    \label{eq:eiv_model}
    \tilde{I} - \Delta I = (\tilde{V} - \Delta V)Y.
\end{align}
Considering Gaussian noise $\Delta V$ and $\Delta I$, as described in \cref{subsec_data_collection}, the MLE is a weighted Total Least Squares (TLS) estimator and thus shares most of its properties such as \emph{de}regularization \cite{markovsky_tls_overview} and unbiasedness \cite{crassidis_tls_cov}.

The variables $V$ and $I$ are deterministic, and the noises $\Delta V$ and $\Delta I$ are independent (see \cref{ass_independence_noise}). The likelihood $l(\tilde V, \tilde I | V,I,\hat Y)$ of $(\tilde V, \tilde I)$, can be written as follows.
%\begin{subequations} \label{eq_abstract_likelihood}
%\begin{align} \label{eq_abstract_likelihood_inter}
\begin{align} \label{eq_abstract_likelihood}
l(\tilde V, \tilde I | V,I,\hat Y) \propto%\;& p(\tilde V, \tilde I | V,I,\hat Y), \\ \propto
\;& p(V + \Delta V|V,\hat Y)p(I + \Delta I|I,\hat Y), \\ %\label{eq_abstract_likelihood_mul}\\
& \mt{s.t.} \; (\tilde V - \Delta V)\hat Y = \tilde I - \Delta I. \nonumber
\end{align}
%\end{subequations}

To work with real variables, we define
\begin{align}\label{def_aAb}
    \boldsymbol{v} &= \lt(\matb \Re\lt(\vect(V)\rt) \\ \Im\lt(\vect(V)\rt) \mate\rt), \quad \boldsymbol{i} = \lt(\matb \Re\lt(\vect(I)\rt) \\ \Im\lt(\vect(I)\rt) \mate\rt), \\
    \A &= \lt( \matb \Re(\mc I_n \otimes V) & - \Im(\mc I_n \otimes V) \\ \Im(\mc I_n \otimes V) & \Re(\mc I_n \otimes V) \mate \rt), \quad y = \lt(\matb \Re(\vect(Y)) \\ \Im(\vect(Y)) \mate \rt). \nonumber
\end{align}
The same transformations can be applied to $\tilde V$, $\tilde I$, $\Delta V$, $\Delta I$, and $\hat Y$, resulting in the vectors $\tilde{\boldsymbol{v}}$, $\tilde{\boldsymbol{i}}$, $\Delta \boldsymbol{v}$, $\Delta \boldsymbol{i}$, $\hat y$, as well as the matrices $\tilde \A$ and $\Delta \A$. Note that $\boldsymbol{v}$ and $\A$ contain the same elements but arranged differently. We will therefore use them interchangeably when describing optimization problems over $\Delta \boldsymbol{v}$ or $\Delta \A$. The matrix $\A$ is introduced to represent the product $VY$ with real and vectorized quantities as shown in \eqref{unreg_mle_notlog} below.

Using the vectorized notations \eqref{def_aAb}, we assume that $\Delta \boldsymbol{v} \sim \mc N(0,\Sigma_v)$ and $\Delta \boldsymbol{i} \sim \mc N(0,\Sigma_i)$, as per the approximate noise model discussed in \cref{subsec_data_collection}. The covariance matrices $\Sigma_v$ and $\Sigma_i$ are computed from \eqref{eq:average-true-variance} as explained in Appendix \ref{appendix:covariance}. The likelihood \eqref{eq_abstract_likelihood} becomes, up to a multiplicative constant, %Note that and all elements of $\Delta V$ and $\Delta I$ are independent, only real and imaginary parts of each element are correlated. Moreover, $\Delta a$ and $\Delta b$ stack real and imaginary parts of $\Delta V$ and $\Delta I$. MOVE TO APPENDIX / noise def, different wording

%Hence $\Sigma_v$ and $\Sigma_i$ are block 2-by-2 matrices, with diagonal blocks made from \eqref{eq:average-true-variance} element-wise. This means that $\Sigma_v^{-1}$ and $\Sigma_i^{-1}$ are also block 2-by-2 with diagonal blocks and have closed form expressions.

%With $\Delta a \sim \mc N(\bb 0_{nN},\Sigma_v)$ and $\Delta b \sim \mc N(\bb 0_{nN},\Sigma_i)$, the likelihood \eqref{eq_abstract_likelihood} becomes
\begin{subequations}\label{unreg_mle_notlog}
\begin{align}
l(\tilde{\boldsymbol{v}}, \tilde{\boldsymbol{i}} | \boldsymbol{v},\boldsymbol{i},\hat y) = \; & e^{-\Delta \boldsymbol{i}^\top \Sigma_i^{-1} \Delta \boldsymbol{i}}e^{-\Delta \boldsymbol{v}^\top \Sigma_v^{-1} \Delta \boldsymbol{v}}, \\
&\mt{s.t.} \; \tilde{\boldsymbol{i}} - \Delta \boldsymbol{i} = (\tilde \A - \Delta \A) y.\label{unreg_mle_notlog_constraint}
\end{align}
\end{subequations}
The corresponding log-likelihood is %given by% the cost function. For numerical stability, it is common to transform $\max l$ into the log-likelihood problem $\min -\log(l)$. \eqref{unreg_mle_notlog} then transforms into the cost function
\begin{align} \label{unreg_mle}
\mc L(\tilde{\boldsymbol{v}}, \tilde{\boldsymbol{i}} | \boldsymbol{v},\boldsymbol{i},\hat y) = & -\Delta \boldsymbol{i}^\top \Sigma_i^{-1} \Delta \boldsymbol{i} - \Delta \boldsymbol{v}^\top \Sigma_v^{-1} \Delta \boldsymbol{v},
\end{align}
subject to \eqref{unreg_mle_notlog_constraint}. For a fixed, albeit unknown $\boldsymbol{v}$ and $\boldsymbol{i}$, we use the shorthand notation $\mc L(\hat y, \Delta \boldsymbol{v}, \Delta \boldsymbol{i})$. Minimizing $-\mc L$ for $\Delta \boldsymbol{i}$, $\Delta \boldsymbol{v}$ and $\hat y$ yields the MLE.%We will write it for a scenario of given $V$ and $I$ as $c(Y, \Delta a, \Delta b)$ in the following sections. Maximizing this cost function for $\Delta b$, $\Delta a$ and $Y$ gives the MLE.

\subsection{Error covariance analysis} \label{subsec_error_cov}
Let $\bar V = \frac{1}{N} \bb 1_N^\top V$ be the vector of voltage means, and $\Phi = \mc I_n \otimes (V - \bb 1_N \bar V)$ be the centered complex data matrix. The MLE with Gaussian noise as defined in \cref{subsec_noise_carac} is a weighted TLS estimator, and therefore, expressions for its error covariance can be found in the literature \cite{huffel_tls_cov, crassidis_tls_cov}. For a time-varying, not identically distributed case such as \eqref{unreg_mle}, \cite{crassidis_tls_cov} proves that the error covariance $\Sigma_{\mt{MLE}}$ is equal to the inverse of the Fischer information matrix $F_{\mt{MLE}}$, for which a first order approximation can be found in \cite[(48)]{crassidis_tls_cov}. We adapt this expression for the complex matrix $\Phi$ using the same transformation to real numbers as for $V$ in \eqref{def_aAb}, and obtain
\begin{align}\label{fischer_mle_unreg}
    F_{\mt{MLE}} = \sum_{q,\symi,\symt=1}^{2,n,N} \lt(\matb \frac{\Re(\Phi_{\symi N+\symt})^\top \Re(\Phi_{\symi N+\symt})}{\Re(z)^\top \mc R_{\Re,q\symi\symt} \Re(z)}  &
    \frac{\Re(\Phi_{\symi N+\symt})^\top \Im(\Phi_{\symi N+\symt})}{\Re(z)^\top \mc R_{\Re\Im,q\symi\symt} \Im(z)} \\ \star &
    \frac{\Im(\Phi_{\symi N+\symt})^\top \Im(\Phi_{\symi N+\symt})}{\Im(z)^\top \mc R_{\Im,q\symi\symt} \Im(z)} \mate\rt) \hspace{-3pt},
\end{align}
where $z = [\vect(Y)^\top, -1-\imj]^\top$ and $\mc R_{\Re,q\symi\symt}$, $\mc R_{\Re\Im,q\symi\symt}$, and $\mc R_{\Im,q\symi\symt}$ are diagonal and constructed from $\Sigma_v$ and $\Sigma_i$ as in \cite{crassidis_tls_cov}, and the $\star$ symbol means that $F_{\mt{MLE}}$ is symmetric. The exact expression is presented in Appendix \ref{appendix_cov_proof}.

One should note that $\hat F_{\mt{MLE}}$, computed from noisy data instead of the exact variables used for $F_{\mt{MLE}}$, is a good approximation only if the signal to noise ratio is high enough. Although this is typically not the case in distribution grids, $F_{\mt{MLE}}$ can still be used for theoretical purposes, such as design of experiments \cite{online_rls} for avoiding unobservablility problems (\cref{subsec_preprocess}). $F_{\mt{MLE}}$ also shows the following properties of the MLE.
\begin{lemma}\label{lem_covariance_shape}
The columns of $\hat Y_{\mt{MLE}}$ are independent. The variance of each column $(\hat Y_{\mt{MLE}}^\top)_\symi$ depends only on the same column $Y_{\cdot\symi}$ of the exact admittance matrix $Y$.
\end{lemma}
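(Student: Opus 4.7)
The plan is to read the block structure of the Fisher information matrix \eqref{fischer_mle_unreg} directly off the Kronecker product $\Phi = \mc I_n \otimes (V - \bb 1_N \bar V)$. First I would partition the column indices of $\Phi$ into $n$ groups of $N$ columns, one group per diagonal Kronecker block. By construction, the nonzero rows of two columns of $\Phi$ lying in different groups are disjoint, so every cross inner product $\Re(\Phi_k)^\top \Re(\Phi_l)$, $\Re(\Phi_k)^\top \Im(\Phi_l)$ and $\Im(\Phi_k)^\top \Im(\Phi_l)$ appearing in the numerators of \eqref{fischer_mle_unreg} vanishes whenever $k$ and $l$ belong to different groups.

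Next I would unpack the weights $\mc R_{\Re,q\symi\symt}$, $\mc R_{\Re\Im,q\symi\symt}$ and $\mc R_{\Im,q\symi\symt}$ from Appendix \ref{appendix_cov_proof} and verify that the quadratic forms $\Re(z)^\top \mc R_{\cdot,q\symi\symt}\Re(z)$ and $\Im(z)^\top \mc R_{\cdot,q\symi\symt}\Im(z)$ only pick out the entries of $z$ that correspond to the $\symi$-th column of $Y$ (plus the fixed $-1-\imj$ coming from the current side). This is the step I expect to be the main obstacle, since it requires carefully tracking how $\Sigma_v$ and $\Sigma_i$ enter the diagonal weights and confirming that the selection pattern matches the Kronecker block singled out by $\symi$.

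With these two observations in hand, I would permute the coordinates of $\vect(\hat Y_{\mt{MLE}})$ so that the real and imaginary parts of each column $Y_{\cdot\symi}$ sit together. After permutation, \eqref{fischer_mle_unreg} is literally block-diagonal with $n$ blocks, and the $\symi$-th block depends on $Y$ only through $Y_{\cdot\symi}$. Inverting, the error covariance $\Sigma_{\mt{MLE}} = F_{\mt{MLE}}^{-1}$ inherits the same block-diagonal structure.

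Finally, the MLE is asymptotically Gaussian with covariance $\Sigma_{\mt{MLE}}$; block-diagonality of $\Sigma_{\mt{MLE}}$ in the column-wise ordering therefore means that the columns of $\hat Y_{\mt{MLE}}$ are uncorrelated, hence (jointly Gaussian and) independent, and the variance of the $\symi$-th column is a function of $Y_{\cdot\symi}$ alone, proving both claims of the lemma.
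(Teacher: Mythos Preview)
Your proposal is correct and essentially the same as the paper's proof: both exploit the block-diagonal Kronecker structure of $\Phi = \mc I_n \otimes (V - \bb 1_N \bar V)$ to make each numerator $\Phi_{\symi N+\symt}^\top \Phi_{\symi N+\symt}$ in \eqref{fischer_mle_unreg} block-diagonal, then verify that the weighted denominators $\Re(z)^\top \mc R_{\cdot,q\symi\symt}\Re(z)$, $\Im(z)^\top \mc R_{\cdot,q\symi\symt}\Im(z)$ involve only $Y_{\cdot\symi}$, yielding a block-diagonal $F_{\mt{MLE}}$ and hence $\Sigma_{\mt{MLE}}$. One minor slip: $\Phi$ has $n^2$ columns, so the partition should be into $n$ groups of $n$ (not $N$) columns, but the argument is unaffected.
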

\begin{corollary}\label{cor_covariance_shape}
When $\mt{Cov}[\Re(\tilde V_\symt),\Im(\tilde V_\symt)] \approx 0$ for all $\symt$ (i.e. with small voltage angles), the variances of both real and imaginary parts of $(\hat Y_{\mt{MLE}}^\top)_\symi$ are monotone with their respective values, and their covariance is constant.
\end{corollary}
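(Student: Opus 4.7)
The strategy is to read off both claims from the Fisher information matrix $F_{\mt{MLE}}$ in \eqref{fischer_mle_unreg}, exploiting the column-wise decoupling of \cref{lem_covariance_shape} to reduce the analysis to a single $2\times 2$ block. By that lemma the columns of $\hat Y_{\mt{MLE}}$ are independent, so the covariance of $(\hat Y_{\mt{MLE}}^\top)_\symi$ is the inverse of the principal submatrix of $F_{\mt{MLE}}$ indexed by the real and imaginary parts of $Y_{\cdot\symi}$ alone. In that restriction the vector $z$ reduces to $[Y_{\cdot\symi}^\top,-1-\imj]^\top$, and the resulting real $2 \times 2$ block has diagonal entries with denominators $\Re(z)^\top\mc R_{\Re}\Re(z)$ and $\Im(z)^\top\mc R_{\Im}\Im(z)$, and an off-diagonal entry with denominator $\Re(z)^\top\mc R_{\Re\Im}\Im(z)$.

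Next I would trace the hypothesis $\mt{Cov}[\Re(\tilde V_\symt),\Im(\tilde V_\symt)]\approx 0$ through the construction of $\Sigma_v,\Sigma_i$ (Appendix \ref{appendix:covariance}) and through the definitions of $\mc R_{\Re,q\symi\symt}$, $\mc R_{\Im,q\symi\symt}$, and $\mc R_{\Re\Im,q\symi\symt}$ inherited from \cite{crassidis_tls_cov}. The decoupling of real and imaginary noise components forces $\mc R_{\Re\Im,q\symi\symt}$ to take a form in which the quadratic coupling $\Re(z)^\top\mc R_{\Re\Im,q\symi\symt}\Im(z)$ reduces to a $Y$-independent constant; intuitively, the only surviving contribution comes from the fixed $-1-\imj$ entries of $z$.

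Finally I would invert the $2\times 2$ block and read off the three statements. The two diagonal entries of $F_{\mt{MLE}}$ contain the quadratic forms $\Re(z)^\top\mc R_{\Re}\Re(z)$ and $\Im(z)^\top\mc R_{\Im}\Im(z)$, which are sums of non-negative squares in the entries of $\Re(Y_{\cdot\symi})$ and $\Im(Y_{\cdot\symi})$ and are therefore monotone in the magnitudes of those entries. Since the off-diagonal block is constant in $Y$ by the previous step, the Schur complement that yields the inverse preserves the monotonicity of the two diagonal covariance entries and leaves the off-diagonal covariance entry constant in $Y$.

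The principal obstacle is the second step: checking that the assumed decoupling of real and imaginary noise components propagates all the way through the construction of $\Sigma_v,\Sigma_i$ and the definitions of the $\mc R_\cdot$ matrices, so that $\Re(z)^\top\mc R_{\Re\Im,q\symi\symt}\Im(z)$ is genuinely $Y$-independent. This is a bookkeeping rather than a conceptual difficulty, but the alternation between the complex notation of the model and the real vectorised notation of \eqref{def_aAb} makes the computation error-prone and is where a careless sign or index would sink the argument.
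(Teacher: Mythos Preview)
Your overall plan matches the paper's: reduce to the column-$i$ block via \cref{lem_covariance_shape}, observe that under the hypothesis the cross denominator $\Re(z)^\top\mc R_{\Re\Im}\Im(z)$ collapses to a $Y$-independent constant (the paper writes this as $\mc D_{\Re\Im,q\symi\symt}(Y_{\cdot\symi})=\mc Q_{q\symi\symt}$), and note that the diagonal denominators $\mc D_{\Re},\mc D_{\Im}$ are monotone nonnegative quadratics in $|\Re(Y_{\cdot\symi})|,|\Im(Y_{\cdot\symi})|$. Two remarks, one minor and one substantive.

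Minor: the block you invert is $2n\times 2n$, not $2\times 2$ --- $Y_{\cdot\symi}$ has $n$ real and $n$ imaginary entries. Your language suggests a scalar inversion, which would make step 4 trivial; it is not.

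Substantive: step 4 as stated does not hold. For a block matrix $M=\bigl(\begin{smallmatrix}A&B\\B^\top&C\end{smallmatrix}\bigr)$ the off-diagonal block of $M^{-1}$ is $-A^{-1}B(C-B^\top A^{-1}B)^{-1}$, which depends on $A$ and $C$; a constant $B$ does not yield a constant off-diagonal in the inverse, and monotone $A,C$ does not immediately give monotone diagonal blocks of $M^{-1}$ once the Schur complement mixes them. The paper sidesteps this by a comparison argument: take $Y^1,Y^2$ with $|\Re(Y^2_{\cdot\symi})|\ge|\Re(Y^1_{\cdot\symi})|$ and $|\Im(Y^2_{\cdot\symi})|\ge|\Im(Y^1_{\cdot\symi})|$, compute the \emph{difference} of the two Fisher blocks, and observe that the off-diagonal terms cancel exactly (because $\mc D_{\Re\Im}$ is $Y$-independent) while each diagonal block is a nonnegative combination of rank-one forms $\Phi_{\symi N+\symt}^\top\Phi_{\symi N+\symt}$, hence positive semidefinite. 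This gives $F_{\mt{MLE}}(Y^2)\preceq F_{\mt{MLE}}(Y^1)$ in the L\"owner order, and inversion reverses it to $\Sigma(Y^2)\succeq\Sigma(Y^1)$, from which monotonicity of the diagonal variances follows directly. You have all the ingredients; replace your Schur-complement claim with this L\"owner-order difference argument.
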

\begin{proof}
The proofs of both \cref{lem_covariance_shape} and \cref{cor_covariance_shape} can be found in Appendix \ref{appendix_cov_proof}.
\end{proof}
\begin{remark}
It follows from \eqref{eq:average-true-variance-cov} that the assumption of \cref{cor_covariance_shape} is also often true in in distribution networks, where voltage angles are very small.
\end{remark}

\cref{lem_covariance_shape} and \cref{cor_covariance_shape} are used in \cref{section_results} to identify and explain what factors can make the parameter identification very imprecise or even impossible.

%\kern0.5em\smash{\raisebox{-1ex}{\Large 0}}

% With $\Phi = [\Phi_\Re, \Phi_\Im]$, \eqref{fischer_mle_reg} can be rewritten as
% \begin{align}\label{fischer_mle_unreg_short}
%     F_{\mt{MLE}} = \Phi^\top W \Phi,
% \end{align}
% where $W$ is constructed from $\Sigma_v$, $\Sigma_i$, and $y_{\mt{exact}}$
% \begin{remark}
% The information, and thus the quality of the estimate, is proportional to the ratio of the data covariance $\Phi^\top \Phi$ over the noise covariance $\Sigma_v$ and $\Sigma_i$. Hence, if the variance of the data is uniformly multiplied by a factor $x$, the noise can be multiplied by the same factor and the estimate would keep the same quality.
% \end{remark}

\section{Bayesian estimation} \label{section_estimation}
Line admittances, even if measured, are typically known up to a tolerance. Some knowledge of $Y$'s structure, such as its sparsity, may also not be certain or precisely defined. This kind of uncertainty can be modeled via Bayesian prior distributions.

%These observations suggest using Bayesian inference for estimating $Y$. Priors can be a powerful tool for mitigating the effect of sample correlation (see section \ref{subsec_data_collection} on the quality of the estimate.

\subsection{Maximum \emph{a posteriori} estimation} \label{subsection_bayesian}
Following \cite{bayes_fang}, we describe how to compute Maximum \emph{A Posteriori} (MAP) estimates for the error-in-variables model \eqref{unreg_mle_notlog}. Using Bayes' rule, the posterior probability density is
\begin{align}\label{eq_abstract_posterior}
    p(V, I, \hat Y|\tilde V, \tilde I) = p(\tilde V, \tilde I | V,I,\hat Y)\frac{p(V,I)}{p(\tilde V, \tilde I)}p(\hat Y),
\end{align}
where we assume that the line parameters $y$ are independent of the grid state $(V,I)$ or its measurement $(\tilde V, \tilde I)$. The factor $\frac{p(V,I)}{p(\tilde V, \tilde I)}$ can be neglected as it is a quotient of non-informative priors \cite{bayes_fang}, defined as uniform distributions on the finite set of feasible voltages and currents. The negative log-posterior minimization of \eqref{eq_abstract_posterior} is then written as 
%$Y$ is separated from $V,I$ because its prior distribution is assumed to be independent from the current state. Moreover, $\frac{p(V,I)}{p(\tilde V, \tilde I)}$ is a constant to the optimization problem. Therefore, the log-likelihood cost function \eqref{unreg_mle} and its optimization problem become
\begin{subequations}\label{bayes_optim_prob}
\begin{align}\label{bayes_optim_prob_cost}
    \min_{\hat y,\Delta \boldsymbol{v}, \Delta \boldsymbol{i}}  & -\mc L(\hat y, \Delta \boldsymbol{v}, \Delta \boldsymbol{i}) - \log \lt(p(\hat y)\rt),\\ \label{bayes_optim_prob_cons}
    & \tilde{\boldsymbol{i}} - \Delta \boldsymbol{i} = \lt( \tilde{\A} - \Delta \A \rt) \hat y,% = \bar{Y}\lt(\tilde{a} - \Delta a\rt).
\end{align}
\end{subequations}
with $\mc L$ defined by \eqref{unreg_mle}. Optimizing \eqref{bayes_optim_prob} provides a maximum \emph{a posteriori} (MAP) estimate $\hat y_{\mt{MAP}}$\footnote{The optimizer may not be unique, in this case $\hat y_{\mt{MAP}}$ is one of the elements of the set of optimizers.}.

The density function $p(\hat y)$ can take many forms. If it is Gaussian, then $-\log(p(\hat y))$ corresponds to a weighted ridge regularization \cite{bayesian_ridge}. In this paper, we will focus mainly on the element-wise Laplace distribution $p(\hat y_\symi) \propto e^{-\lambda |\hat y_\symi|}$, where all elements $p(\hat y_\symi)$ of the prior are assumed independent, and therefore $p(\hat y) = \prod_\symi p(\hat y_\symi)$ \cite{bayesian_lasso}. One obtains $-\log(p(\hat y)) = \lambda\|\hat y\|_1 + \mt{const}$, where the constant can be neglected in the optimization problem, obtaining an $\ell_1$ regularization term. The $\ell_1$ regularization can also be interpreted as a robustification of the MLE optimization problem $\min_{\hat y,\Delta \boldsymbol{v}, \Delta \boldsymbol{i}} \; -\mc L(\hat y, \Delta \boldsymbol{v}, \Delta \boldsymbol{i})$ \cite{bertsimas2018_robust_opt}.

Prior distributions are centered on the believed value of the exact admittance $y_i$, which can be different from zero, e.g., in case of an existing line. More generally, one can also believe that a linear combination of $y$ has a particular value \cite{bayesian_and_generalized_lasso}. For example, the belief of two lines having the same admittance is equivalent to that of their difference being zero, not knowing the actual value. The probability density $p(L\hat y - \mu)$ of a linear transformation $\hat y \rightarrow L \hat y - \mu$ can describe such a belief. The penalty function is then
\begin{align}\label{eq_ell1_proba}
    -\log(p(\hat y)) = \lambda\|L \hat y - \mu\|_1.
\end{align}

\begin{remark}
On one hand, if one has two priors about a line $\symi$, conditioned on independent events B and C, then similarly to \eqref{eq_abstract_posterior}, Bayes' rule gives that $p(\hat y_\symi)$ is proportional to $p(\hat y_\symi|B)p(\hat y_\symi|C)$. If $p(\hat y_\symi|B)$ and $p(\hat y_\symi|C)$ are both Laplace distributions, this is equivalent to adding a row to $L$ and $\mu$ in \eqref{eq_ell1_proba}. The same operation can be repeated for a larger number of priors on $\symi$.

On the other hand, if one has no prior about the line $\symi$, then the non-informative prior $p(\hat y_\symi)$ is a uniform distribution on the bounded support of $\hat y_\symi$, and can be factorized away in $p(\hat y)$, leaving $p(\hat y) \propto \prod_{\symk \neq \symi} p(\hat y_\symk)$.
\end{remark}
% In the rest of the paper, we will use the simplified notation $p(\hat y|x)$ for a prior using a variable $x$. While $\hat y$ is not actually conditioned on $x$, it is conditioned on the event that one thinks $p(\hat y)$ is computed using $x$. If they are multiple, such events are indeed independent.

Introducing additional information may reduce the variance of the estimate. Using the approximation $\|\hat y\|_1 \approx \hat y^\top [|y|+\alpha]^{-1} \hat y$ similarly to \cite[Appendix A]{lasso_cov}, we can make $\log(p(\hat y))$ in \eqref{bayes_optim_prob} smooth. Hence, the Fisher information is given by the Hessian of the log-likelihood with smoothed prior.
\begin{align}
    F_{\mt{MAP}} \approx E\lt[\frac{\partial^2}{\partial \hat y^2}( -\mc L(\hat y, \Delta a, \Delta b) - \log (p(\hat y)))\rt].
\end{align}
Both the expected value and second derivative operators are distributive. For $p(\hat y) \propto e^{-\lambda \|L\hat y - \mu\|_1}$, $F_{\mt{MAP}}$ is therefore approximated as 
\begin{align}\label{fischer_mle_reg}
    F_{\mt{MAP}} \approx F_{\mt{MLE}} + \lambda L^\top ([|L y - \mu|] + \alpha \mc I])^{-1} L,
\end{align}
where $\alpha > 0$ is a small enough parameter. If the data matrix $\bb V$ is not full rank (so neither is $F_{\mt{MLE}}$), \eqref{fischer_mle_reg} can be used for assessing where prior knowledge is required to make $F_{\mt{MAP}}$ full rank.

\subsection{Prior distributions from known grid parameters}\label{subsec_hard_priors}
In this section, we will explain how to inject various forms of prior knowledge into the estimation problem. This knowledge can be in the form of partially known parameter values, other known grid properties such as sparsity, constraints on the signs of parameters, or on ratios between their values.

\emph{A priori} knowledge about the parameter $y_{\symi}$ can be done as follows.
\begin{align}\label{eq_prior_known_single}
    -\log(p(\hat y | \beta_{\symi})) = \lambda \|e_{\symi}^\top (\hat y - y_{\symi}) \|_1.
\end{align}
If a set $\mc H$ of parameters is known, and $\beta \in \R^{n^2}$ is a vector such that $\beta_{\symi} = y_{\symi}$ for all ${\symi} \in \mc H$ and zero otherwise, the log prior distribution becomes
\begin{align}\label{eq_prior_known}
    -\log(p(\hat y | \beta)) = \lambda \|[L_{\symi} e_{\symi}^\top]_{\symi \in \mc H}(\hat y - \beta) \|_1,
\end{align}
where each $L_{\symi} \in \R_+$ is a parameter modelling the confidence in the corresponding value $\beta_{\symi}$.

Sparsity is the belief that each parameter has a high probability to be zero. In the Bayesian framework, this translates into a zero-centered distribution, identical for all parameters, which is a standard $\ell_1$ penalty $\lambda \|\hat y \|_1$ in the log-space. If some lines are known to exist or not, the sparsity prior can be removed or strengthened only for them. Let $\beta_s \in \{0,1,K\}^{2n^2}$ be a sparsity pattern containing a $0$ for an existing line, an arbitrarily large $K$ for an absent line, and a $1$ otherwise. The corresponding sparsity promoting prior is
\begin{align}\label{eq_prior_sparse_onlysome}
    -\log(p(\hat y | \beta_s)) = \lambda \|[\beta_s]\hat y \|_1.
\end{align}

% If one knows the actual value $\beta_i$ of the conductance or susceptance of a line $i$, one can set $\mu_i = L_{ii} \beta_i$ in \eqref{eq_ell1_proba}, where $L_{ii}$ is positive and represents the confidence in this knowledge. This way, an estimate $\hat y_i$ diverging from $\beta_i$ will be penalized in \eqref{bayes_optim_prob}. Furthermore, $\beta_i$ can be zero if a line does not exist. Hence, with $L$ as an identity matrix and $\mu$ as a zero vector, one obtains a standard sparsity-promoting $\ell_1$ penalty. If the topology of the network or part of it is known, one can set the elements $L_{ii}$ to zero for any $i$ corresponding to an existing line. This way, these entries will not be penalized and therefore their estimates will be less biased. For radial networks with few ramifications, assuming a chain topology can be a reasonable prior. Similarly, $L_{ii}$ can be arbitrarily large for any $i$ corresponding to an absent line.

Prior knowledge may not directly concern the value of individual parameters. One could also know the sign of the elements of $y$. For instance, the conductance $\Re(Y_{\symi\symk})$ is always non-negative because of energy conservation laws, and in the absence of series capacitance, the susceptance $\Im(Y_{\symi\symk})$ is non-positive for all lines. In order to obtain a lower probability for the wrong signs, we will use the asymmetric Laplace distribution \cite{Kotz2001} with a very large parameter on the corresponding side. In this case, $\log(p(\hat y))$ becomes a sum of skewed absolute values written as
\begin{align}\label{eq_prior_signs}
    -\log(p(\hat y|s)) = \lambda \|\hat y\|_1 + K \sum_{\symi=1}^{2n^2}(1-s_\symi\mt{sgn}(\hat y_\symi)),
\end{align}
where $K$ is an arbitrarily large constant and $s$ is a vector with elements in $\{-1,1\}$ defining the believed signs of $y$.

Constraining the ratios between parameters is another form of prior knowledge that can be useful in many cases, for example if two lines are parallel, or if one knows the type of cable used for a certain line. For a ratio $\hat \rho_{\symi\symk} = \frac{y_{\symk}}{y_{\symi}}$, we introduce the prior
\begin{align}\label{eq_prior_single_ratio}
    -\log(p(\hat y|\hat \rho_{\symi\symk})) = \lambda \lt\|( \rho_{\symi\symk} e_{\symi} - e_{\symk})^\top \hat y \rt\|_1.
\end{align}
While the exact value of the resistance $R$ and the reactance $X$ of a line depends on its length, the type of cable gives their ratio $R$/$X$. A constant $R$/$X$ ratio for the entire network is a common assumption for distribution grids. 
%This ratio is quite simple to obtain from extra segments of the material used to build the lines. 
By letting $\hat \rho$ be the estimated $R$/$X$ ratio, the corresponding prior is
\begin{align}\label{eq_prior_rx_ratio}
    -\log(p(\hat y|\hat \rho)) = \lambda \lt\|[ \hat \rho \mc I_{n^2} , - \mc I_{n^2} ] \hat y \rt\|_1.
\end{align}

%We will not provide a direct comparison of all these prior because their effects are highly dependent of the power grid studied. Moreover, they may not all be applicable as they rely on specific types of knowledge that may be absent. 

\subsection{Data-driven prior distributions} \label{subsec_priors}
%To obtain a standard $\ell_1$ penalty in \eqref{bayes_optim_prob} using \eqref{eq_ell1_proba}, one can just pick $L$ as an identity matrix and $\mu$ as a zero vector. This will promote sparsity on every entry of $\hat Y$, but will cause a bias \cite[remark 1]{S_TLS}. If the topology of the network or part of it is known, one can set the elements $L_{ii}$ to zero for any $i$ corresponding to an existing line. This way, these entries will not be penalized and therefore their estimates will be less biased. For radial networks with few ramifications, assuming a chain topology can be a reasonable prior. Furthermore, if one knows the actual value $\beta_i$ of the conductance or susceptance of a line $i$, one can set $\mu_i = L_{ii} \beta_i$, where $L_{ii}$ is positive and represents the confidence in this knowledge. This way, an estimate $\hat y_i$ diverging from $\beta_i$ will be penalized.
Bayesian priors can incorporate model beliefs resulting from other estimation methods. Thereafter, we show how MLE can be used to refine the priors described in \cref{subsec_hard_priors}. Sparsity promotion using \eqref{eq_prior_sparse_onlysome} can create a bias. This bias, can be reduced by using the weight $L = [|\hat y_{\mt{MLE}}|]^{-1}$, similarly to the adaptive Lasso method \cite{tomlin}. This yields
\begin{align}\label{eq_prior_adaptive}
    -\log(p(\hat y | y_{\mt{MLE}})) = \lambda\| [|\hat y_{\mt{MLE}}|]^{-1} \hat y\|_1
\end{align}
Note that this method is only asymptotically unbiased (as $N \rightarrow \infty$). For a finite sample size, it does not cancel the $\ell_1$ penalty's bias but still reduces it \cite{lasso_zou2006}.

\begin{remark}
For constructing the prior \eqref{eq_prior_adaptive}, $y_{\mt{MLE}}$ is considered constant rather than a random variable, preserving independence towards other potential priors.
\end{remark}
%with error in variables, the error variance of very small (or zero) coefficients is much bigger than for larger ones \cite[p. 248]{huffel_tls_cov}). Therefore, most of the estimation error in $y_{\mt{MLE}}$ is concentrated on the zero entries. Note that for a connected grid, the diagonal entries of $Y$ are always non-zero. Hence, the diagonal of $Y_{\mt{MLE}}$ can be used as a reference value for the sum of the admittances of all the lines connected to each node.

The hyperparameter $\lambda$ in \eqref{eq_ell1_proba} is very important because if it is too big, then the bias will be unnecessarily large, and if it is too small, some entries will not be effectively shrunk to zero. However, it can generally not be tuned using cross validation (which is the standard approach in Lasso) because $\Sigma_v$ and $\Sigma_i$ depend on the values of $\boldsymbol{v}$ and $\boldsymbol{i}$. For very sparse systems such as power grids, there may not exist a $\lambda$ such that the MAP estimate is both sparse enough and moderately biased. To solve this issue, one can add a prior on $\|\hat y\|_1$, centered on its believed value $\hat \gamma$. If the sign of every $y_i$ is known and with $s$ from \eqref{eq_prior_signs}, one can construct this prior as
\begin{align}\label{eq_prior_contrast}
    -\log(p(\hat y|\hat \gamma)) = \frac{\lambda'}{\hat \gamma} |s^\top \hat y - \hat \gamma|.
\end{align}
If $\lambda' > \lambda$, it limits the bias created by a too large $\lambda$, which makes its tuning much more tolerant to errors and allows for a more aggressive regularization.

Directly estimating $\hat \gamma$ with maximum likelihood as $\hat \gamma = \|\hat y_{\symi, \mt{MLE}}\|_1$ may be biased. The error $\hat y_{\mt{MLE}} - y$ is a Gaussian random variable centered at zero \cite{crassidis_tls_cov}. Hence, the terms of $\|\hat y_{\mt{MLE}}\|_1$ are absolute values of Gaussian variables centered at the corresponding $y_\symi$. For a random variable $X \sim \mc N (\bar X, \sigma_X^2)$, if $\bar X \gg \sigma_X$ then $|X|$ can be approximated as Gaussian, but if $\bar X = 0$, $|X|$ has a half-Gaussian distribution and $E[|X|] = \frac{\sqrt{2}}{\sqrt{\pi}}\sigma_X$. If $y$ is very sparse, many parameters are centered at zero, which creates a large bias on $\hat \gamma$. If one knows the sign $s_\symi \in \{-1,1\}$ that each parameter $y_\symi$ should have, one can replace $|\hat y_{\symi,\mt{MLE}}|$ by $s_h \hat y_{\symi,\mt{MLE}}$. The resulting estimate
\begin{align}\label{eq_gamma_estimate}
    \hat \gamma = s^\top \hat y_{\mt{MLE}}
\end{align}
has the same variance as $\|\hat y_{\mt{MLE}}\|_1$ but is unbiased because $E[\hat \gamma] = \sum_{\symi} s_\symi y_\symi = \|y\|_1$. Even without using \eqref{eq_prior_contrast}, this value can be useful to assess if the chosen $\lambda$ is too large. Note that the same argument applies if $\hat{y}_{\mt{MLE}}$ is split into groups of elements (e.g. into columns). However, if the groups are too small, the variance of each $\hat\gamma$ may be very large, leading to an erroneous prior.

\section{Structural priors} \label{subsec_structural}
Due to the structure of power networks, $Y$ has peculiar properties. If the network does not include phase-shifting transformers and power lines are not compensated by series capacitors, $Y$ is symmetric. Moreover, for networks where shunt elements are negligible, $Y$ is Laplacian \cite{kundur2007power}. Since phase-shifting transformers are usually employed in transmission systems, and shunt admittances are negligible for medium-sized grids, with line less than 60 km long, it is safe to assume that standard distribution networks have a Laplacian admittance matrix \cite{taleb2006performance}. 

\subsection{Formal definition}

Under the assumption that $Y$ is Laplacian, entries on and above the main diagonal of $Y$ can be derived from the elements below the diagonal. Therefore, in order to avoid redundant variables, one can proceed as in \cite{online_rls} and use duplication and transformation matrices $D$ and $T$ to remove the redundant entries from the identification problem and solve for $\hat y_r = [\Re(\ve(\hat Y))^\top, \Im(\ve(\hat Y))^\top]^\top$ instead. In case some entries of $Y$ are known to be zero, one can derive variants of $D$ and $T$ and also remove these zero entries from $\hat y$ by following a procedure similar to the one presented in \cite[Appendix 2]{online_rls}. In both cases, $\hat y = (\mc I_2 \otimes D \cdot T)\hat y_r$ and the equation \eqref{unreg_mle_notlog_constraint} becomes
\begin{align}\label{pf_eq_laplacian}
    \tilde{\boldsymbol{i}} - \Delta \boldsymbol{i} = (\tilde \A - \Delta \A) (\mc I_2 \otimes D \cdot T) \hat y,
\end{align}
The diagonal entries of $Y$ are often the largest, as they are the sum of all other entries on the same rows. According to \cref{cor_covariance_shape} and its following remark, the diagonal entries may not only have a large variance, but also cause one for all other elements. Using the $D$ and $T$ matrices also improves this last point.

\subsection{Implications for Bayesian priors}\label{subsec_priors_dt}

The MAP estimate can be computed using \eqref{bayes_optim_prob} while replacing \eqref{bayes_optim_prob_cons} by \eqref{pf_eq_laplacian}. One can also simply replace $\hat y$ by $\hat y_r$ in all the priors presented in \cref{section_estimation}. However, the Laplacianity of $Y$ opens additional opportunities. Similarly to \cref{subsec_hard_priors}, if the line susceptance is assumed negative, $\mt{diag}(Y_{\mt{MLE}})$ provides the following alternative estimate for $\hat \gamma$.
\begin{align} \nonumber
    \|\hat y\|_1 &= \sum_{\symi=1}^n\sum_{\symk=1,\symk \neq \symi}^n \lt\|\matb \Re(\hat Y_{\symi\symk}) \\ \Im(\hat Y_{\symi\symk}) \mate\rt\|_1 + \sum_{\symi=1}^n\lt\|\matb \Re(\hat Y_{\symi\symi}) \\ \Im(\hat Y_{\symi\symi}) \mate\rt\|_1, \\ 
    &= 2 \sum_{\symi=1}^n \lt\|[\Re(\hat Y_{\symi\symi}), \Im(\hat Y_{\symi\symi}) ]\rt\|_1.
\end{align}
Note that $\|\hat y\|_1 = 4\|\hat y_r\|_1$ because in $\hat y$, the diagonal elements make up half of the norm and the other elements are present twice. Furthermore, a prior on $\sum_{\symk=1, \symk\neq \symi}^n Y_{\symi\symk}$ for all $\symi \in \mc V$ can also limit the bias of a sparsity-promoting one. If the exact value is not available and with a Laplacian $Y$, this prior can be centered on $Y_{\mt{\symi\symi,MLE}}$. This yields
\begin{align}\label{eq_prior_nondiag}
    &-\log(p(\hat y_r | \mt{diag}(Y_{\mt{MLE}}))) = \\
    &\quad \lambda' \sum_{\symi=1}^n \lt|\frac{ \sum_{\symk=1, \symk\neq \symi}^n \Re(\hat Y_{\symi\symk})}{\Re(\hat Y_{\mt{MLE},\symi\symi})} - 1\rt| + \lt|\frac{ \sum_{\symk=1, \symk\neq \symi}^n \Im(\hat Y_{\symi\symk})}{\Im(\hat Y_{\mt{MLE},\symi\symi})} - 1\rt|. \nonumber
\end{align}
More details about the representation of $p(\hat y_r | \mt{diag}(\hat Y_{\mt{MLE}}))$ in the same form as \eqref{eq_ell1_proba} are presented in Appendix \ref{appendix_nondiag}.

In the simulation \cref{section_results}, we have a large network with a very sparse, Laplacian admittance matrix. Hence, we will us $D$ and $T$ with the prior $p(\hat y_r | s)$, as well as $p(\hat y_r | \hat y_{r,\mt{MLE}})$ and $p(\hat y_r | \mt{diag}(\hat Y_{\mt{MLE}}))$, which are built using the non-diagonal and diagonal elements respectively. Both priors are combined using the remarks in \cref{subsection_bayesian,subsec_priors}.

\section{Numerical methods} \label{section_numeric}

Before solving \eqref{bayes_optim_prob} to obtain an estimate, several improvements can be done by pre-processing available measurements. This section will first explain how centering and filtering the data, as well as removing hidden nodes can improve the estimation. Then, we will present and compare different algorithms to solve the optimization problem.

\subsection{Data pre-processing} \label{subsec_preprocess}

Some networks may have nodes with no load attached. If a node is unloaded, then the corresponding column of $V$ is a linear combination of the columns corresponding to neighboring nodes, as it is determined by a simple voltage divider. In this case, $\Phi$ in \eqref{fischer_mle_unreg} does not have full rank, so $F_{\mt{MLE}}$ is singular. In other words, the Cramer-Rao bound becomes infinite, at least for some parameters, which means that they cannot be reliably reconstructed by any unbiased estimator. In a similar spirit, if all the nodes are loaded but some nodes have much lower loads than others, $F_{\mt{MLE}}$ can be full rank, but some of its eigenvalues can be very small. When inverting $F_{\mt{MLE}}$, the small eigenvalues become very large, which means that the variance of the corresponding estimates will also be large.

%From \eqref{fischer_mle_unreg}, the Fisher information matrix of the unregularized problem $F_u$ is a linear combination of $D^\top T^\top (A_{iN+t}^\top A_{iN+t})DT$ for all $t$ and $i$. For convenience, we simplify $A_{iN+t}$ by $A_t$, making $\forall i$ implicit. Each $D^\top T^\top (A_t^\top A_t)DT$ is rank one, but if $A_t$ can span over the whole column space of $DT$, then $ADT$ may have full rank. However, if a node $i$ has no outgoing current, then the corresponding columns in the exact data matrix $A$ is a linear combination of other columns, as it is determined by a simple voltage divider. Because of the Laplacian shape of $Y$, $ADT$ also always has linearly dependent columns. Hence, $F_u$ will not have full rank, which qualitatively translates into some entries of $y_{\mt{MLE}}$ having infinite variance. To avoid uncertain estimates, every node considered for the estimation should also provide a measure of current.

From \eqref{fischer_mle_reg}, it is apparent that a prior on lines near an unloaded node may help with this issue as it potentially compensates for the rank deficiency of $F_{\mt{MLE}}$. However, this makes the prior the only source of information. Another solution is to take out these nodes from the problem, and identify a reduced matrix $Y_{\mt{red}}$ such that $I_{-\symi} = Y_{\mt{red}}V_{-\symi}$. In most applications, such as sensitivity analysis or control, this reduced matrix is sufficient as it only removes redundant parameters, while keeping an equivalent model for the remaining nodes.
%The current exchanged between the nodes directly connected to an unloaded node $i$ is conserved (as none is drawn by node $i$). Hence, there exist a Laplacian matrix $Y_{\mt{red}}$ such that $I_{-i} = Y_{\mt{red}}V_{-i}$, where $I_{-i}$ is the matrix of currents without the $i$th column and similarly for $V_{-i}$.
$Y_{\mt{red}}$ can be computed using the Kron reduction method \cite{florian_kron}. %Note that this method tends to connect all the neighbors of the reduced node $i$ in a fully connected fashion. While this is not a problem if $i$ has only two neighbors, it may generate very low entries in $Y$ that could be regularized to zero. Moreover it often transforms a radial network into a non-radial one, which may be a problem for further uses of the estimate but not for the identification. If one knows that a network is radial, node $i$ can be placed back and the original $Y$ can be reconstructed similarly to a $\Delta$ - Y transformation.

For a power grid with a rated voltage $V_{\mt{rated}}$, the matrix $\tilde V^\top \tilde V$ is almost equal to $V_{\mt{rated}} \bb 1_n \bb 1_n^\top$. This matrix is then almost singular with one eigenvalue much larger than all others. This can be an issue for numerical stability both for minimizing \eqref{unreg_mle} or for solving \eqref{bayes_optim_prob}. If $Y$ is Laplacian, one can use $\tilde V_c = \tilde V - V_{\mt{rated}} \bb 1_N \bb 1_n^\top$ instead, as the second term is cancelled by $\bb 1_n^\top Y = \bb 0_n^\top$. The covariance $\Sigma_v$ still needs to be computed with $V$ and not $V_c$ due to the non-linear transformation \eqref{eq:average-true-variance}.

% For numerical stability, centered data is usually preferred. The vector $b$ in \eqref{def_aAb} is centered by definition because the current entering the network on average must also leave to satisfy energy conservation. Moreover, the global average voltage can be removed from the voltages (and thus, of the data matrix $\A$ and vector $\boldsymbol{v}$). Note that this global average should be very close to the rated voltage if the grid is well controlled. It yields
% \begin{align}
%     \tilde V_c = \tilde V - \bb 1_N \bb 1_n^\top \frac{1}{nN} \sum_{i=1}^n \sum_{t=1}^N \tilde V_{it}.
% \end{align}
% The power flow equation $\tilde I - \Delta I = Y(\tilde V_c - \Delta V)$ is equivalent to the one with the original $\tilde V$ because $\bb 1_n^\top Y = \bb 0_n^\top$. The covariance $\Sigma_v$ can however not be computed with $V_c$ due to the polar transformation described in \ref{subsec_noise_carac} being nonlinear.

Finally, a low pass (moving average) filter can help reducing the measurement noise. Measurements from µPMUs are usually very frequent (50 to 120Hz) \cite{upmu_charac}, and load variation has an average period of a couple minutes. A low pass filter with a cutoff frequency equal to the one of the load profiles will not remove relevant information from the signal. However, as $K$ measurements are averaged, the noise variance of a filtered measurement is reduced since $[\sigma_{\varepsilon}^2, \sigma_\delta^2]_{\mt{filtered}} = K^{-1}[\sigma_{\varepsilon}^2, \sigma_\delta^2]$.
% \begin{align}
%     [\sigma_{\varepsilon}^2, \sigma_\delta^2]_{\mt{filtered}} = K^{-1}[\sigma_{\varepsilon}^2, \sigma_\delta^2].
% \end{align}

\subsection{Optimization algorithms} \label{subsec_algorithms}
If $-log(p(\hat y))$ is convex, the optimization problem \eqref{bayes_optim_prob} has a convex cost and bilinear constraints due to the multiplication of $\hat y$ and $\Delta \A$. Similarly to weighted TLS, no closed-form solution is available \cite{markovsky_tls_overview}.

The most basic algorithm for solving \eqref{bayes_optim_prob} is the alternate block coordinate descent (BCD), which sets $\Delta \A$ to constant to solve $\hat y$ for the next iteration ($\symkit$) and vice versa, as explained in \cite{S_TLS}. With $c = -\mc L - \log(p(\hat y))$, the update is
\begin{subequations} \label{bcd_update}
\begin{align}\label{bcd_update_a}
    \hspace{-4px} \Delta \A_{\symkit} &= \argmin_{\Delta \A} \; c \lt( \hat y_{\symkit-1}, \Delta \A, \tilde{\boldsymbol{i}} - \lt( \tilde{\A} \!-\! \Delta \A \rt)\hat y_{\symkit-1} \rt), \\
    \hat y_{\symkit} &= \argmin_{\hat y} \; c \lt(\hat y, \Delta \A_{\symkit}, \tilde{\boldsymbol{i}} - \lt( \tilde{\A} \!-\! \Delta \A_{\symkit} \rt)\hat y \rt). \label{bcd_update_y}
\end{align}
\end{subequations}
\eqref{bcd_update} shows two convex sub-problems that can be solved iteratively. However, the $\hat y$-subproblem \eqref{bcd_update_y} may not admit a closed-form solution, depending on $p(\hat y)$. When it does not, \eqref{bcd_update_y} can become the computational bottleneck.

To improve performance, one can use the approximation
$$\|L(\hat y - \mu)\|_1 \approx (\hat y - \mu)^\top L^\top [|L(\hat y_{\symkit-1}-\mu)| + \alpha \bb 1]^{-1} L(\hat y-\mu)$$
in the expression of $p(\hat y)$, with a small enough $\alpha$. This algorithm is called broken adaptive ridge regression (BAR)\cite{bar_asympto, bar_def}, and provides a closed-from approximate solution to \eqref{bcd_update_y}. If $L$ is diagonal, another possible alternative is use an operator splitting method such as ADMM \cite{admm_def, beck2017_prox_methods}. However, experimental evidence shows that more iterations are needed for ADMM to converge. A comparison of the three algorithms can be found in table \ref{table_algo_speeds}.

\begin{table}[H]
\centering
\begin{tabular}{|c||c|c|}
    \hline
    Algorithm & iterations to convergence & iterations/second \\
    \hline
    BCD & $\sim$10000 & 1.25 \\
    \hline
    BAR & $\sim$10000 & 30 \\
    \hline
    ADMM & $\sim$30000 & 28 \\
    \hline
\end{tabular}
\caption{Comparison of the speed of three algorithms on a 9 nodes network with $N = 400$ measurement samples, using a standard $\ell_1$ penalty with the same $\lambda$. In all cases, they are executed on a MacBook Pro with a 2.3GHz Intel i7 processor running Python 3.8. }
\label{table_algo_speeds}
\end{table}
\begin{remark} \label{rem_dual_lambda}
The convergence speed highly depends on the regularization parameter $\lambda$. To speed it up, a higher penalty can be applied in the first iterations, and then decreased to the optimal one. This optimal value can be computed using $\hat Y_{\mt{MLE}}$ as shown in \cref{subsec_priors}.
\end{remark}

If $L$ is diagonal, the proximal gradient method is also usable \cite{beck2017_prox_methods, proximal_methods}, but it requires to tune the step size on top of the regularization parameter. This can take many iterations, especially if the information contained in the data is very limited. Also, it relies on thresholding providing a closed-form proximal operator of the $\ell_1$ norm \cite{proximal_regularization}. If $L$ is not diagonal, the proximal operator becomes a piece-wise linear function with a number of pieces scaling with the number of combinations of signs in $L$. It therefore becomes quickly prohibitive to compute. In practice, non-diagonal priors are used for cancelling bias (\cref{subsec_priors}) or for keeping some parameters close to the same value, which can be needed for a three phased network if some sub-networks have three times the same line. The proximal Newton method \cite{proximal_methods} is not usable: it relies on the transformation of the proximal operator by the Hessian matrix of $c$, which is generally dense and therefore requires the analysis of up to $2^{2n^2}$ different sign combinations.

\section{Simulation results} \label{section_results}

\begin{figure}[hbt]
     \begin{center}
     \begin{subfigure}[b]{\textwidth}
     \begin{center}
        \includegraphics[width=0.5\textwidth]{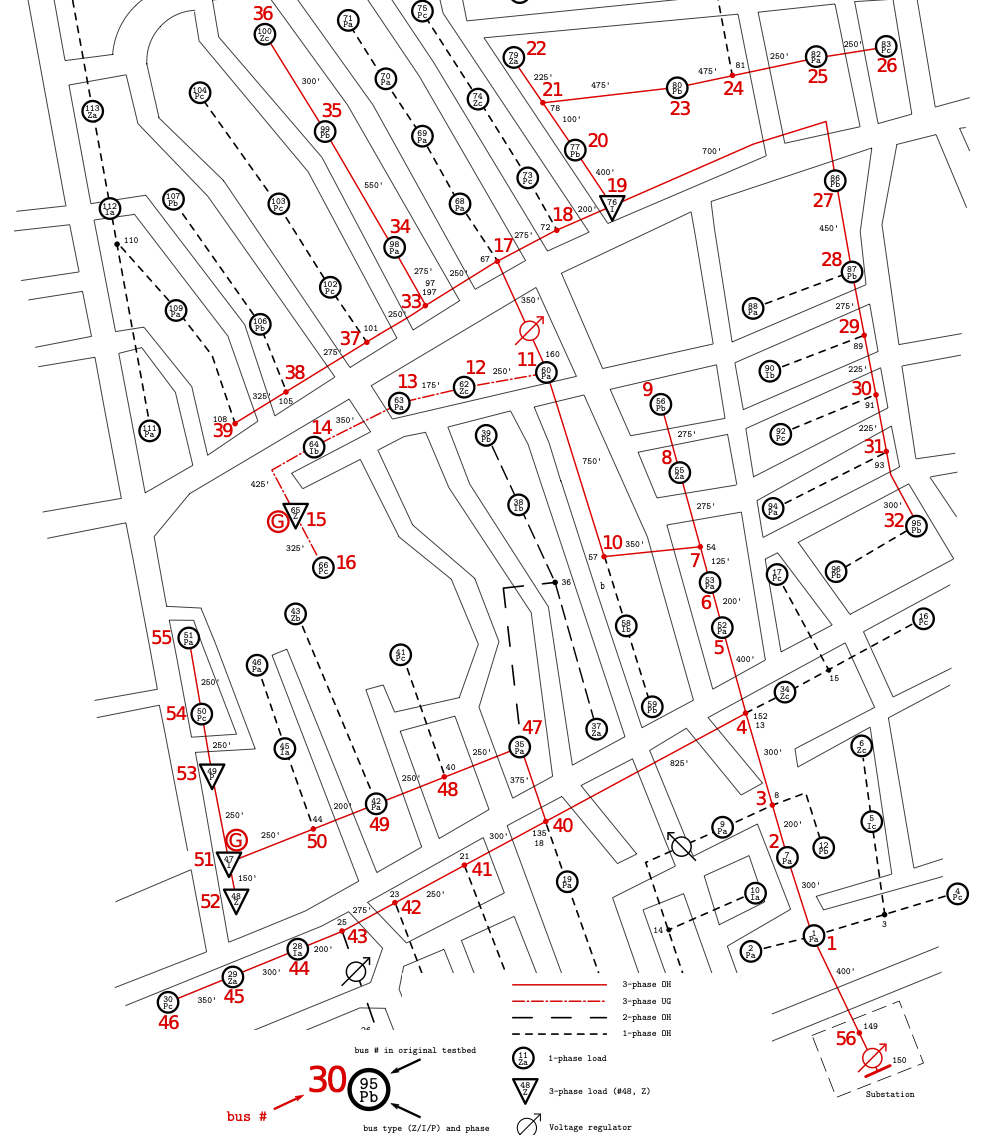}
        \caption{Map of the network from \cite{bolognani_network}.}
        \label{fig:ymat}
        \end{center}
     \end{subfigure}
     \newline
     \begin{subfigure}[b]{0.8\textwidth}
        \centering
        \input{loads.tex}
        \caption{Sample day of generated load profiles, with node 4's one highlighted.}
        \label{fig:loads}   
     \end{subfigure}
    \caption{Graphic representation of the simulation settings.}
    \label{fig:simsetup}
    \end{center}
\end{figure}

We apply the estimation method presented in \cref{section_estimation}, as well as approaches from other works, to a 56-node network with realistic parameters, admittances, noise levels and load profiles. In order to obtain the results, the BAR algorithm is implemented using a hardware-accelerated linear solver from NVidia's CUDA tool. The framework is programmed in Python and available on GitHub \cite{gitrepo}. It simulates the voltage and current measurements from the real network parameters, nominal loads, and given household load profiles. It then computes the various least squares and MLE and MAP estimates from the simulated measurements.

\subsection{Setup} \label{subsec_simsetup}

To simulate the identification problem, we use the IEEE 123 bus network. First, in \cref{subsec_results_balanced}, assuming that the loads on each phase are balanced, the three-phased part of the network is transformed into a 56 nodes, single-phase feeder using the method and parameters from \cite{bolognani_network}. Second, in \cref{subsec_results_unbalanced} the three-phased part is simulated with its original unbalanced loads. Finally, in \cref{subsec_results_reduced}, the identification is performed with partial information, which means that some nodes are unobserved and the equivalent admittance between the observed nodes is identified.

The load profiles for each node are generated with the GENETX generator \cite{loads_model}. It creates random realistic loads for households according to parameters like penetration of renewable energies (set to 35\%) or electric vehicles (40\%). The tool then generates a thousand one-minute-resolution demand profiles, for households situated in the Netherlands during week 12 of the year. Samples at different sampling frequencies are then extrapolated linearly from this data. To create the load profiles for each node (\cref{fig:loads}), the demand profiles of randomly selected households are summed until the nominal power is reached.
% \begin{remark} \label{rem_house_sum}
% Summing up $n$ households yields a lower variance than having one household consuming $n$ times more power. This is due to $\mt{Var}[n X] = n^2\sigma^2$ and $\mt{Var}[\sum_{i=1}^n X] = n \sigma^2$, where $X_i \sim \mc N(0,\sigma^2)$ represents the uncorrelated part of the load profiles.
% \end{remark}

Voltage and current values are generated by simulating the network using the PandaPower library \cite{pandapower} with a measurement frequency of 50Hz. We then add 0.01\% of Gaussian noise in polar coordinates, as described in \cref{subsec_noise_carac}. %These parameters are optimistic but stay in the standard range of µPMU specifications. Note that the noise is absolute, but expressed relatively to the ratings of the µPMU (section \ref{subsec_noise_carac}). It can therefore easily be mistaken for relative noise, which is almost equivalent for voltages since they vary only a little around the nominal value. However, relative noise $\frac{\Delta I_{it}}{I_{it}}$ on currents is very different than the absolute noise in reality. 
Note that the noise generated by a $\mu$PMU depends on its rating. Assuming that the $\mu$PMUs are adapted to their nodes, we choose a rating of four times the nominal power. In order to reduce the computational complexity, the samples collected over a minute are averaged as proposed in \cref{subsec_preprocess} and the identification method is performed with the averaged samples.

A whole week of data with 50Hz sampling rate may seem excessive, but it is required to reach a practical 1\% to 2\% error. For different sample sizes, \cref{fig:sample_complexity} shows the expected relative Frobenius error $E[\varepsilon_F]$, where
\vspace{-0.5em}
\begingroup \belowdisplayskip=-2pt
\begin{align}\label{eq_frob_err_def}
    \varepsilon_F = \frac{\|\hat Y - Y\|_F}{\|Y\|_F}.
\end{align}
\endgroup
This quantity is obtained by sampling $\hat Y$ as a Gaussian random variable centered on $Y$ and with a covariance given by the Cramer-Rao lower bound $\mt{Cov}[y_{\mt{MLE}}] = F_{\mt{MLE}}^{-1}$, where $F_{\mt{MLE}}$ is computed using \eqref{fischer_mle_unreg}.

\setlength{\belowdisplayskip}{4pt}
\vspace{4em}
\begin{figure}[H]
    \centering
    \begin{tikzpicture}

\begin{axis}[
width=11cm,
height=5cm,
log basis x={10},
log basis y={10},
tick align=outside,
tick pos=left,
x grid style={white!69.0196078431373!black},
xmajorgrids,
xmin=1, xmax=60,
xmode=log,
xtick style={color=black},
xtick={1, 2, 3, 5, 7, 10, 15, 21, 30, 45, 60},
xticklabels={1, 2, 3, 5, 7, 10, 15, 21, 30, 45, 60},
y grid style={white!69.0196078431373!black},
ymajorgrids,
ymin=0.390317971942074, ymax=75,
ymode=log,
ytick style={color=black},
ytick={0.5,1,2,5,10,20,50},
yticklabels={0.5\%,1\%,2\%,5\%,10\%,20\%,50\%},
ylabel={relative estimation error},
xlabel={sample size [days]}
]
\path [draw=black, fill=black, opacity=0.2]
(axis cs:1,10.6243380732252)
--(axis cs:1,10.1906619267748)
--(axis cs:2,6.27658598606245)
--(axis cs:3,4.82492277670476)
--(axis cs:5,3.54984761610426)
--(axis cs:7,2.8947402671699)
--(axis cs:10,2.36591078535048)
--(axis cs:15,1.90234173993475)
--(axis cs:21,1.5536591287879)
--(axis cs:30,1.27623162426799)
--(axis cs:45,1.02757684962107)
--(axis cs:60,0.880454284223443)
--(axis cs:60,0.916045715776557)
--(axis cs:60,0.916045715776557)
--(axis cs:45,1.07792315037893)
--(axis cs:30,1.32726837573201)
--(axis cs:21,1.6218408712121)
--(axis cs:15,1.96015826006525)
--(axis cs:10,2.47808921464952)
--(axis cs:7,3.0167597328301)
--(axis cs:5,3.75015238389574)
--(axis cs:3,5.22157722329524)
--(axis cs:2,6.75041401393755)
--(axis cs:1,10.6243380732252)
--cycle;

\path [draw=blue, fill=blue, opacity=0.2]
(axis cs:1,75.1506044557625)
--(axis cs:1,72.0893955442375)
--(axis cs:2,44.3777030674326)
--(axis cs:3,34.1159049265212)
--(axis cs:5,25.0961805589574)
--(axis cs:7,20.459709400521)
--(axis cs:10,16.7310417207231)
--(axis cs:15,13.4535513917809)
--(axis cs:21,11.0110662486942)
--(axis cs:30,9.00309810163746)
--(axis cs:45,7.28322864177169)
--(axis cs:60,6.24513107382191)
--(axis cs:60,6.48436892617809)
--(axis cs:60,6.48436892617809)
--(axis cs:45,7.61977135822831)
--(axis cs:30,9.35340189836254)
--(axis cs:21,11.4789337513058)
--(axis cs:15,13.8864486082191)
--(axis cs:10,17.5339582792769)
--(axis cs:7,21.320290599479)
--(axis cs:5,26.5138194410426)
--(axis cs:3,36.9190950734788)
--(axis cs:2,47.7322969325674)
--(axis cs:1,75.1506044557625)
--cycle;

\addplot [semithick, black]
table {%
1 10.4075
2 6.5135
3 5.02325
5 3.65
7 2.95575
10 2.422
15 1.93125
21 1.58775
30 1.30175
45 1.05275
60 0.89825
};
\addlegendentry{50Hz}

\addplot [semithick, blue]
table {%
1 73.62
2 46.055
3 35.5175
5 25.805
7 20.89
10 17.1325
15 13.67
21 11.245
30 9.17825
45 7.4515
60 6.36475
};
\addlegendentry{1Hz}
\end{axis}

\end{tikzpicture}
    \caption{Graph in logarithm scale of the sample complexity for two sampling rates and a noise level of 0.01\%.}
    \label{fig:sample_complexity}
\end{figure}
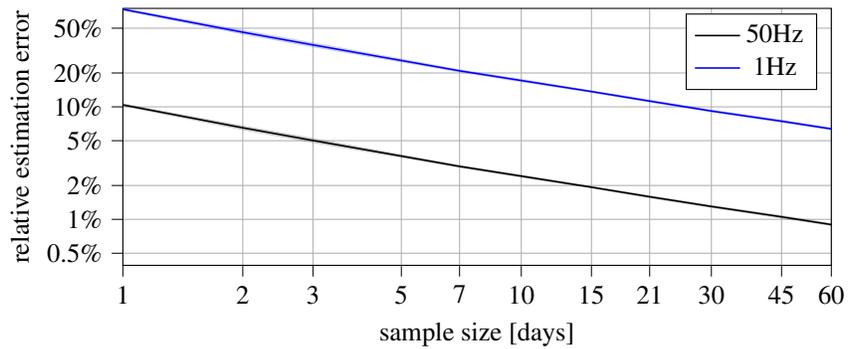

The MLE is efficient so its expected error should be equal to the Cramer-Rao bound \cite[chapter 7, 10]{statistical_inference_book}. However, this is true for the multivariate distribution. It then propagates non-linearly through \eqref{eq_frob_err_def} and the results in terms of norms may be slightly different. The Cramer-Rao bound provides a method-free indication of the results to expect, which shows the difficulty of the problem.

\subsection{Balanced network} \label{subsec_results_balanced}

With the high accuracy of µPMUs, and to the data preprocessing described in \cref{subsec_preprocess}, the Cramer-Rao lower bound is estimated around 3\% using \eqref{fischer_mle_unreg}. The MLE manages to retrieve an approximate but fair estimate with 5.77\% error (\cref{fig:tls_heatmap_1p}). Although the sparsity of the admittance matrix is above 98\%, this estimate is dense. %With the noise levels used in this section, the MLE is almost equal to the TLS estimate and will therefore not be shown. 
\cref{fig:mle_heatmap_1p} shows the MAP estimate with a prior distribution as described in \cref{subsec_priors_dt}. It does not use any exact information, but only the MLE estimate described above as a starting point for the BAR algorithm, and achieves 1.21\% error, beating the Cramer-Rao bound.

%figure is above
%We also compare the MAP with the OLS and TLS estimates (used as a baseline), as well as adaptive Lasso from \cite{tomlin}. We recall that both OLS and Lasso do not consider errors in the voltage variables. \cref{fig:comparison} shows the magnitude of the estimated admittance for all 55 existing lines, as well as the exact value. Note that the MLE has been omitted because it is very close to the TLS estimate, and that this plot does not allow one to compare the density of the estimated admittance matrix.

\begin{figure}[htb]
     \centering
     \begin{subfigure}[t]{0.286\textwidth}
        \centering
        \adjincludegraphics[height=5cm,trim={0 0 {.25\width} 0},clip]{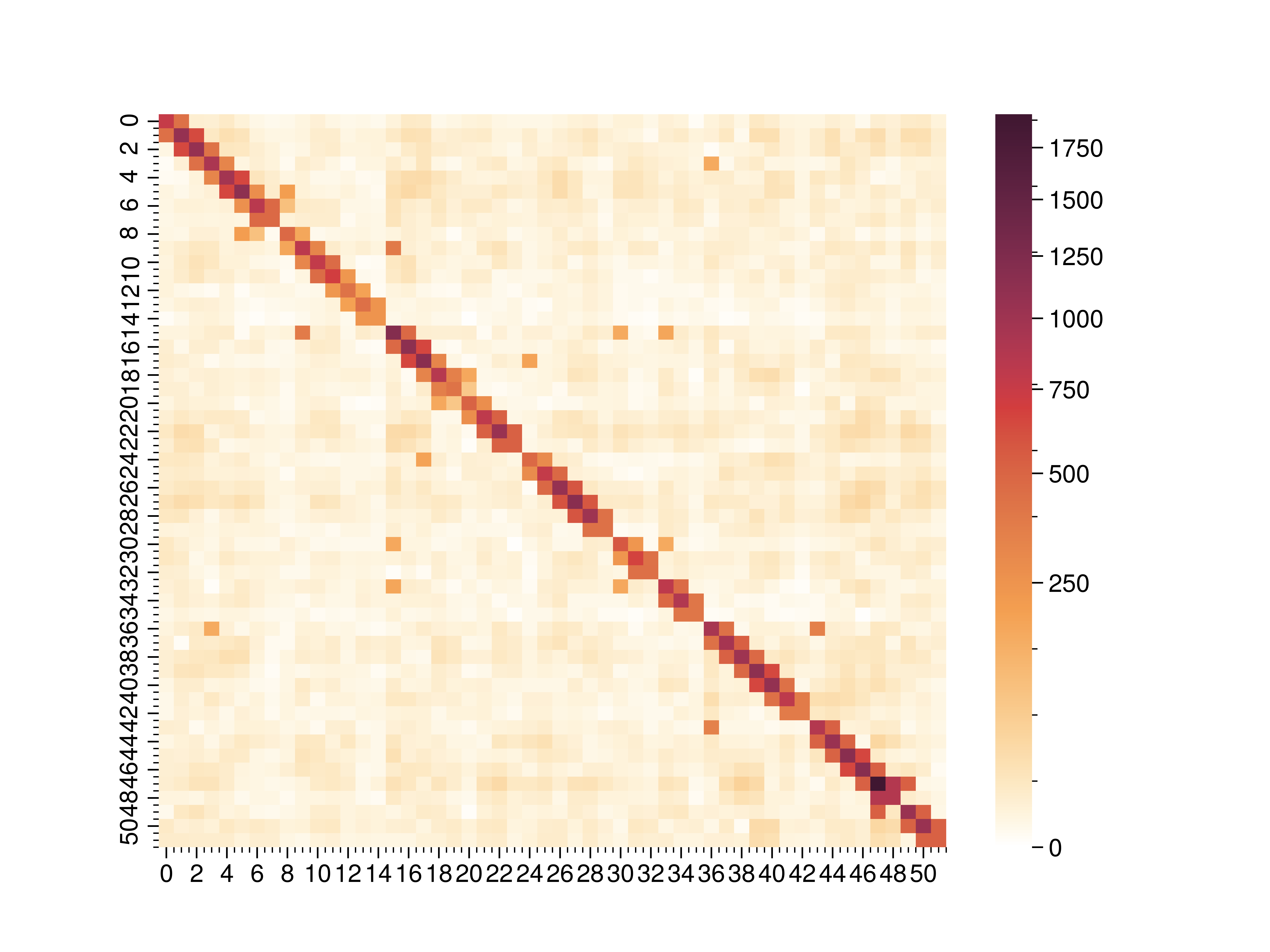}
        \caption{MLE estimate.}
        \label{fig:tls_heatmap_1p}   
     \end{subfigure}
     \begin{subfigure}[t]{0.286\textwidth}
        \centering
        \adjincludegraphics[height=5cm,trim={0 0 {.25\width} 0},clip]{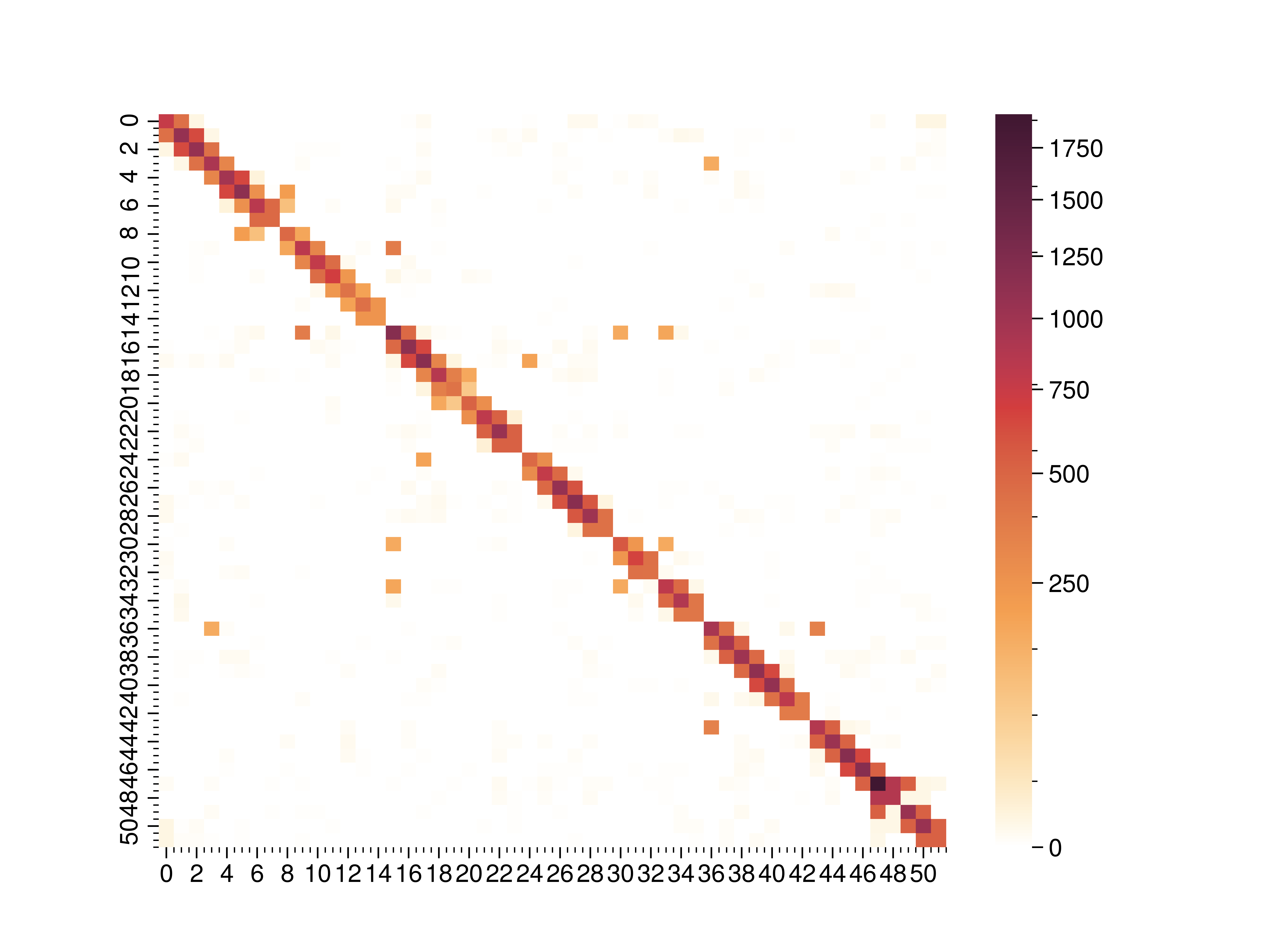}
        \caption{MAP estimate.}
        \label{fig:mle_heatmap_1p}   
     \end{subfigure}
     \begin{subfigure}[t]{0.34\textwidth}
        \centering
        \adjincludegraphics[height=5cm,trim={0 0 0 0},clip]{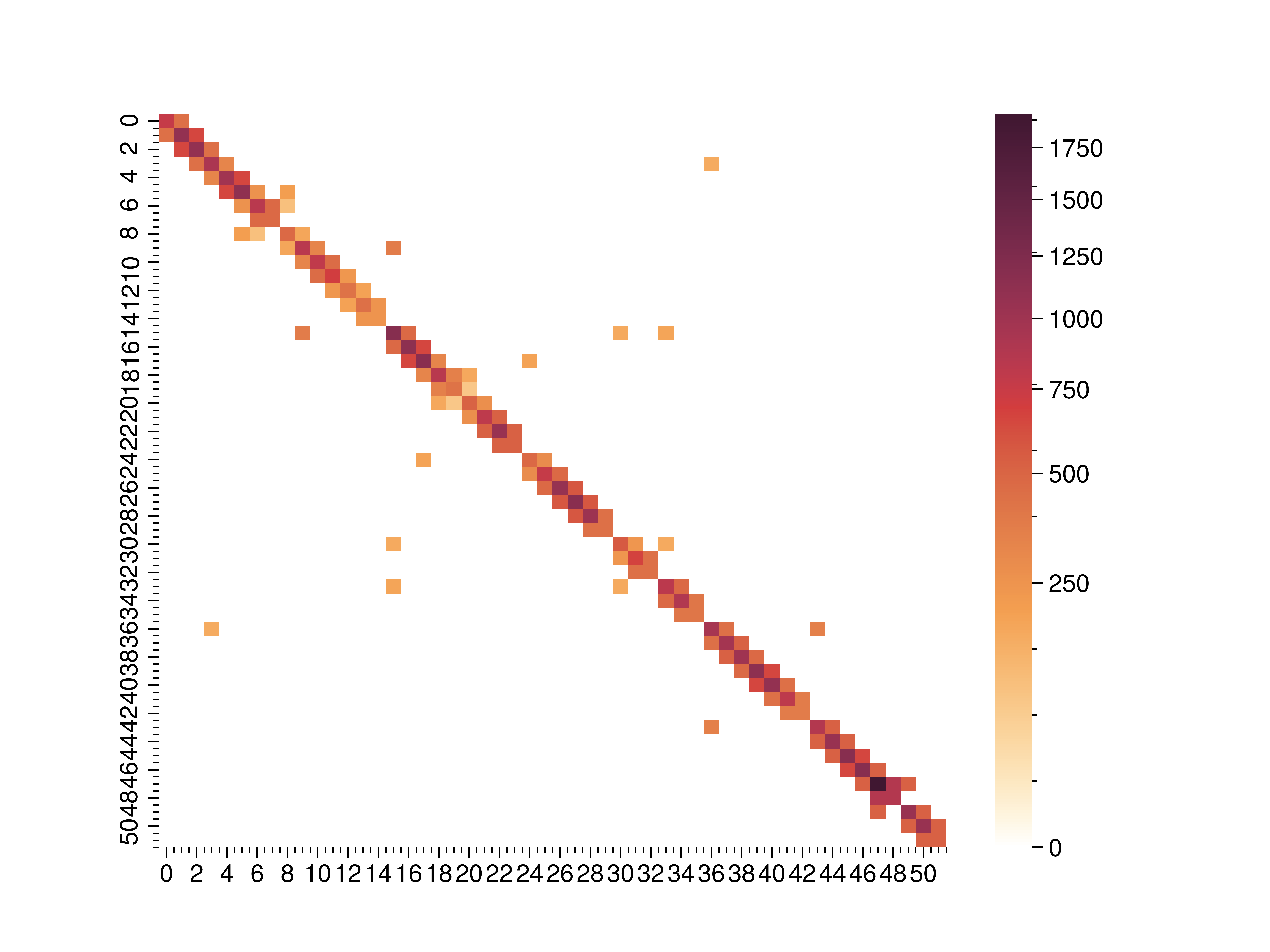}
        \caption{Real admittance matrix.}
        \label{fig:y_heatmap_1p}
     \end{subfigure}
    \caption{Heat-maps of real and estimated single-phase equivalent admittance matrices. Different colors correspond to log-spaced values.}
    \label{fig:heatmaps_1p}
\end{figure}

\subsection{Unbalanced network} \label{subsec_results_unbalanced}

Three-phased identification is a much more challenging problem. If some phases are not connected at any node or if any load is balanced, the voltage matrix $V$ will not have full rank. To circumvent this issue, we identify the closest balanced infrastructure network (i.e. a network with transposed lines) using sequence voltages, currents and admittances \cite{book_sequence}. In this case, zero, positive, and negative sequences can be estimated separately, which requires a rank of $V$ three times lower, and any unbalances in the network infrastructure will be considered as noise. The prior distribution presented in \cref{subsec_priors_dt} can be used for each sequence. \cref{fig:heatmaps_3p} shows the reconstructed phase admittance matrix. The error is 6.9\% for the MLE and 1.6\% for MAP.

\begin{figure}[htb]
     \centering
     \begin{subfigure}[t]{0.286\textwidth}
        \centering
        \adjincludegraphics[height=5cm,trim={0 0 {.25\width} 0},clip]{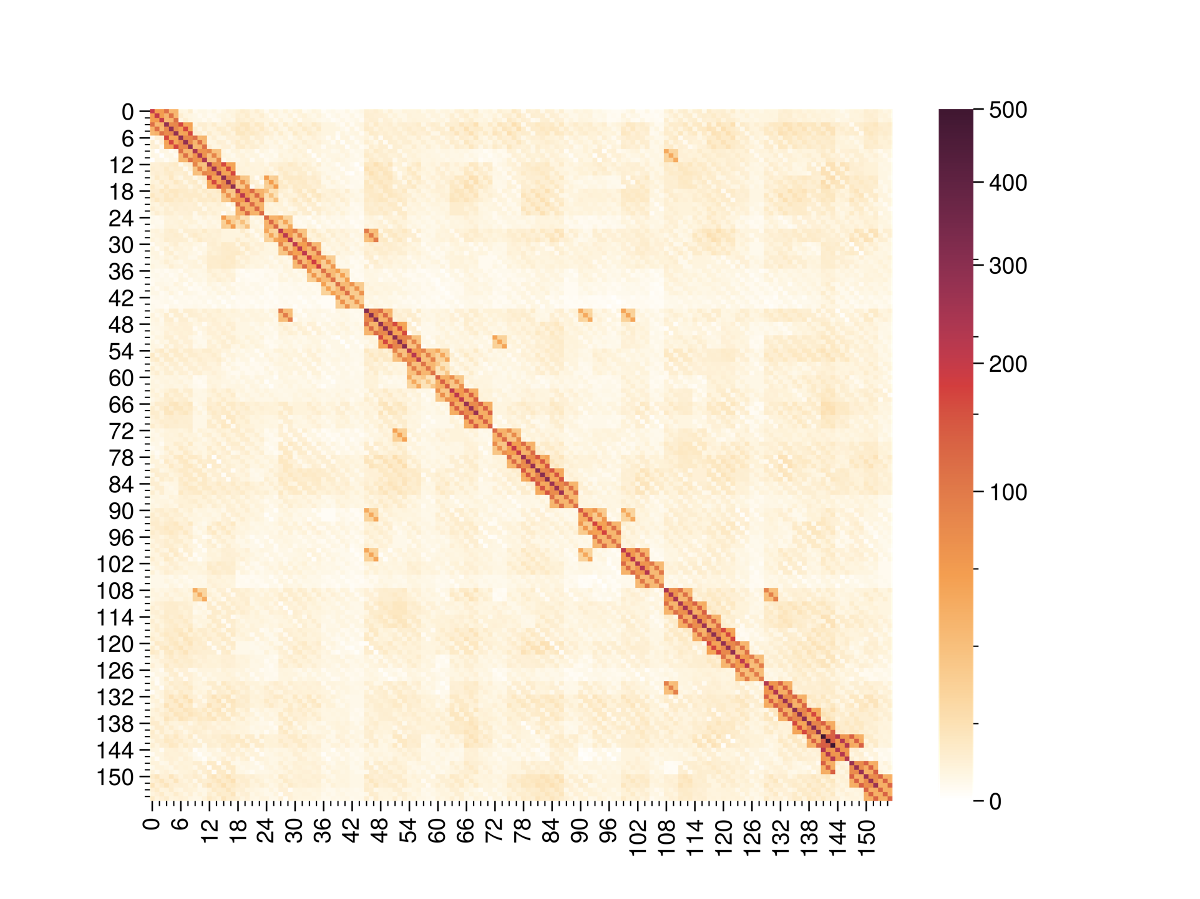}
        \caption{MLE estimate.}
        \label{fig:tls_heatmap_3p}   
     \end{subfigure}
     \begin{subfigure}[t]{0.286\textwidth}
        \centering
        \adjincludegraphics[height=5cm,trim={0 0 {.25\width} 0},clip]{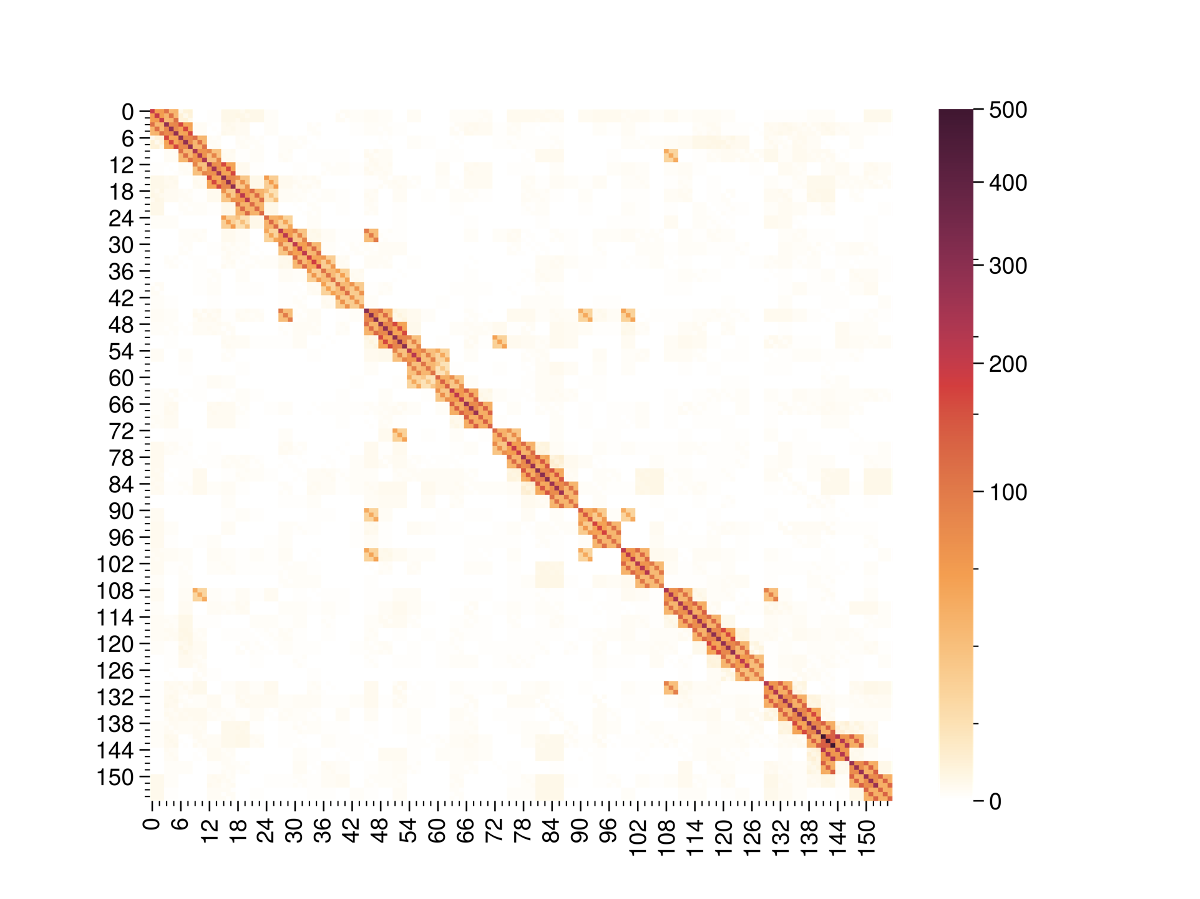}
        \caption{MAP estimate.}
        \label{fig:mle_heatmap_3p}   
     \end{subfigure}
     \begin{subfigure}[t]{0.34\textwidth}
        \centering
        \adjincludegraphics[height=5cm,trim={0 0 0 0},clip]{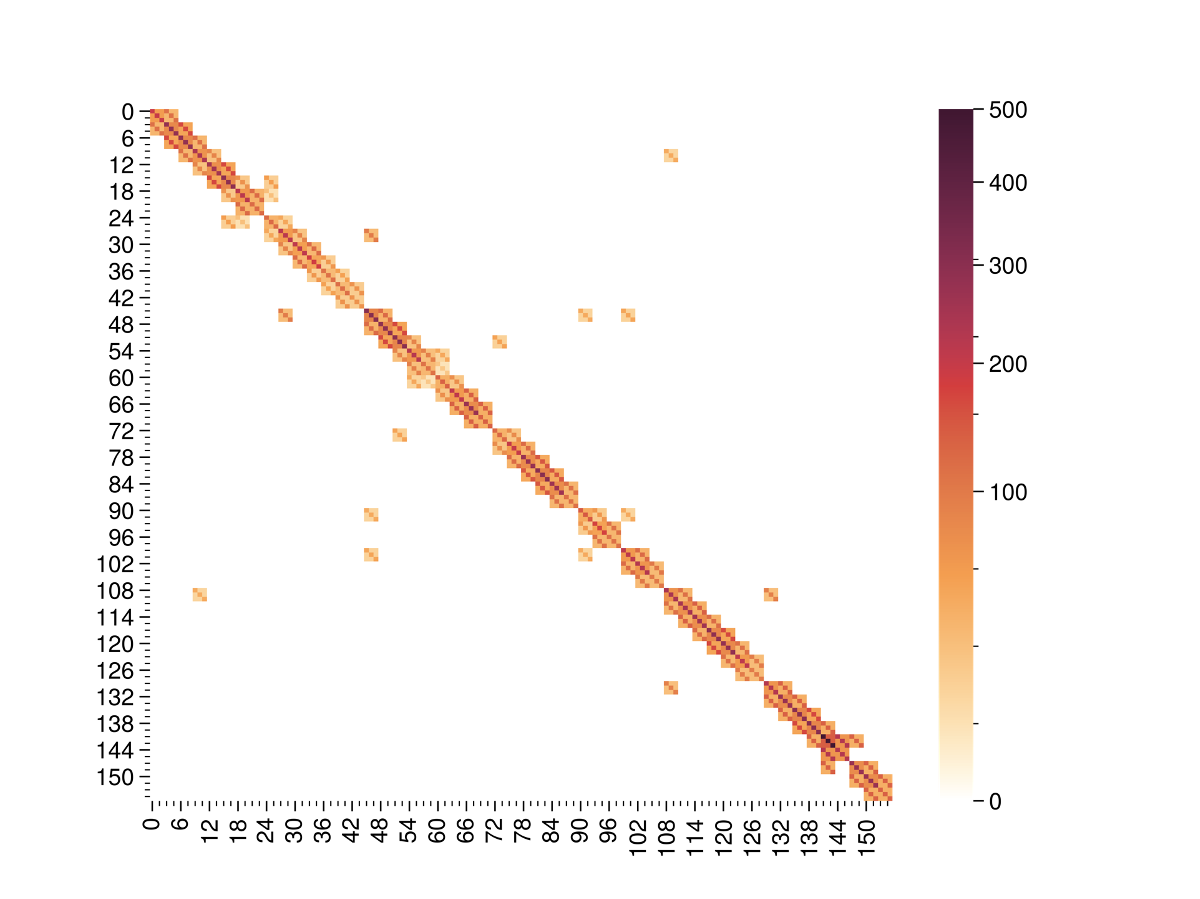}
        \caption{Real admittance matrix.}
        \label{fig:y_heatmap_3p}
     \end{subfigure}
    \caption{Heat-maps of real and estimated three-phased admittance matrices.}
    \label{fig:heatmaps_3p}
\end{figure}

One can observe that the error is similar to the single-phase cases, even though there are three times more parameters to estimate. This is due to the fact that the signal strength is similar but the admittance of the single phase equivalent is three times higher (three lines are considered as one). From \eqref{fischer_mle_unreg}, one can see that the information matrix is divided by 9 in the single-phase case, which increases the error. This also applies to parallel lines of the same phase.

\subsection{Reduced network} \label{subsec_results_reduced}

Following guidelines on optimal $\mu$PMU placement \cite{optimal_upmu_placement}, 40\% of the nodes\footnote{The nodes 1, 3, 4, 6, 8, 9, 10, 12, 15, 16, 17, 18, 19, 22, 24, 26, 28, 32, 36, 37, 39, 40, 43, 44, 46, 47, 49, 50, 51, 52, 53, and 55 are observed. (see \cref{fig:ymat}.)} in the network represented in \cref{fig:ymat} are not observed. The corresponding reduced admittance matrix (i.e. the matrix satisfying \eqref{eq:current-voltage} with reduced $i$ and $v$) is estimated with a 6.25\% error using MLE (\cref{fig:tls_heatmap_red}), and 2.49\% error using MAP (\cref{fig:mle_heatmap_red}). Note that because the network is now smaller and less sparse, the MLE is better but the contribution of the sparsity-promoting prior is less pronounced than for the full network.% although still significant.

\begin{figure}[htb]
     \centering
     \begin{subfigure}[t]{0.286\textwidth}
        \centering
        \adjincludegraphics[height=5cm,trim={0 0 {.25\width} 0},clip]{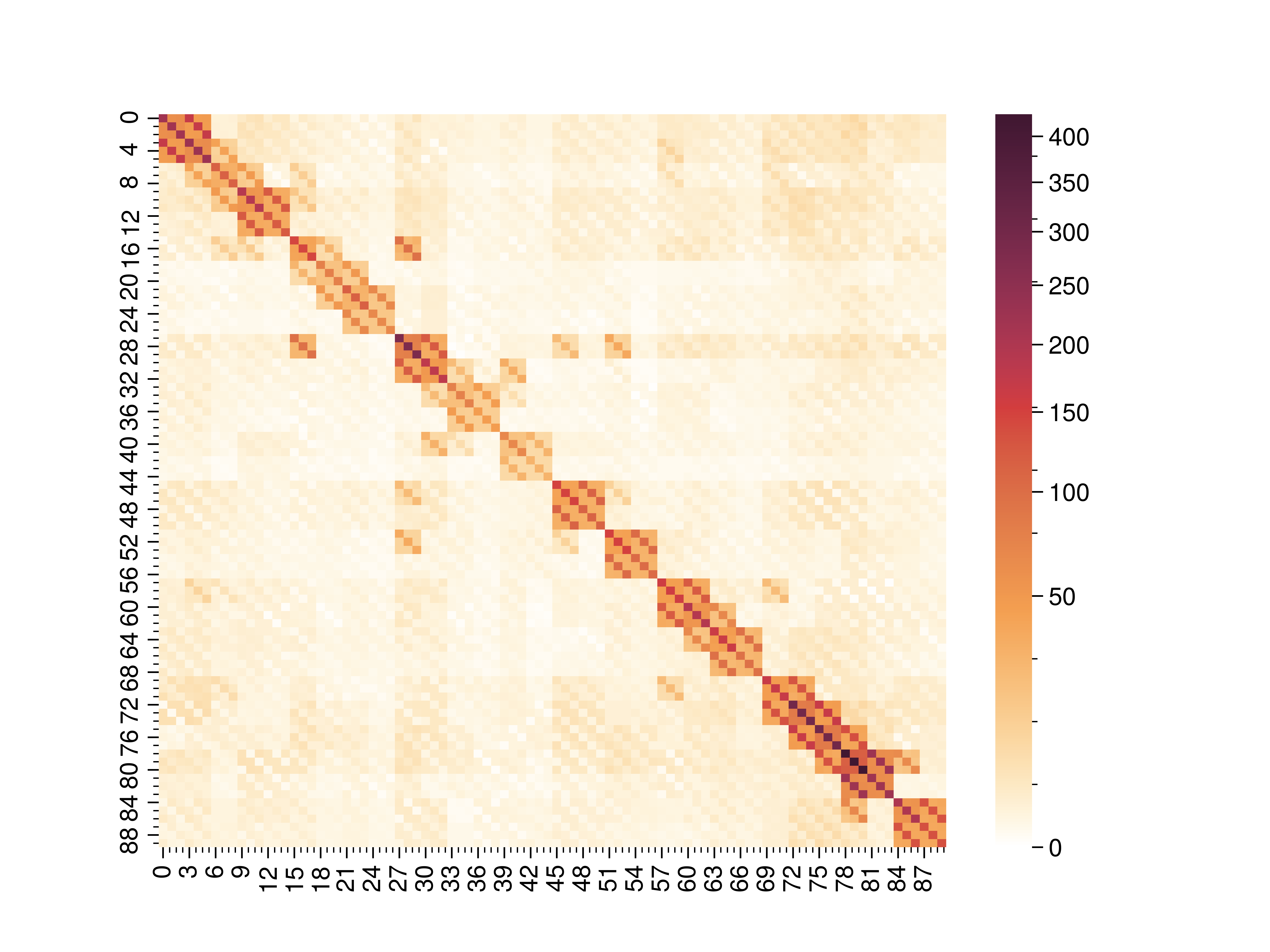}
        \caption{MLE estimate.}
        \label{fig:tls_heatmap_red}   
     \end{subfigure}
     \begin{subfigure}[t]{0.286\textwidth}
        \centering
        \adjincludegraphics[height=5cm,trim={0 0 {.25\width} 0},clip]{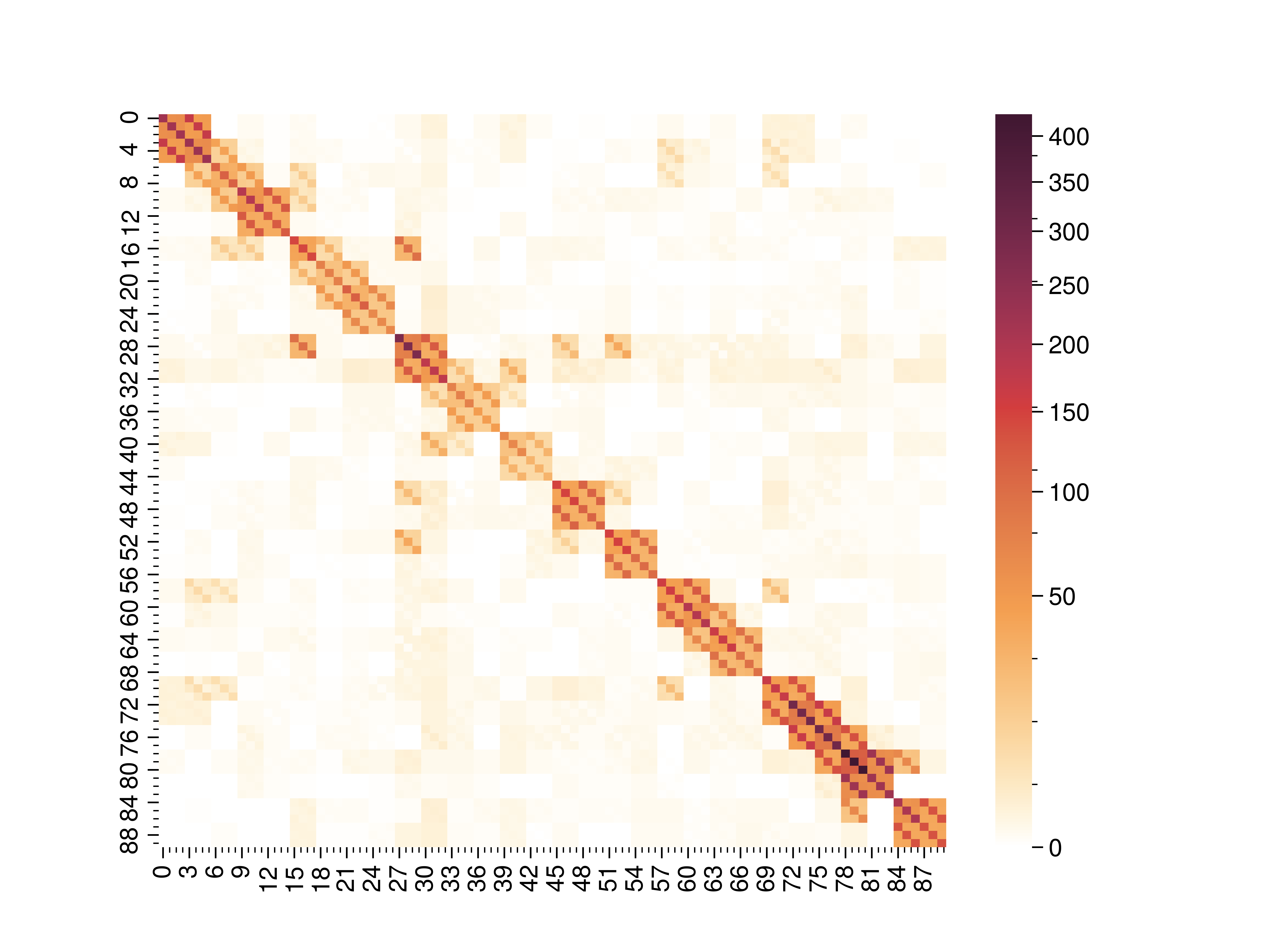}
        \caption{MAP estimate.}
        \label{fig:mle_heatmap_red}   
     \end{subfigure}
     \begin{subfigure}[t]{0.34\textwidth}
        \centering
        \adjincludegraphics[height=5cm,trim={0 0 0 0},clip]{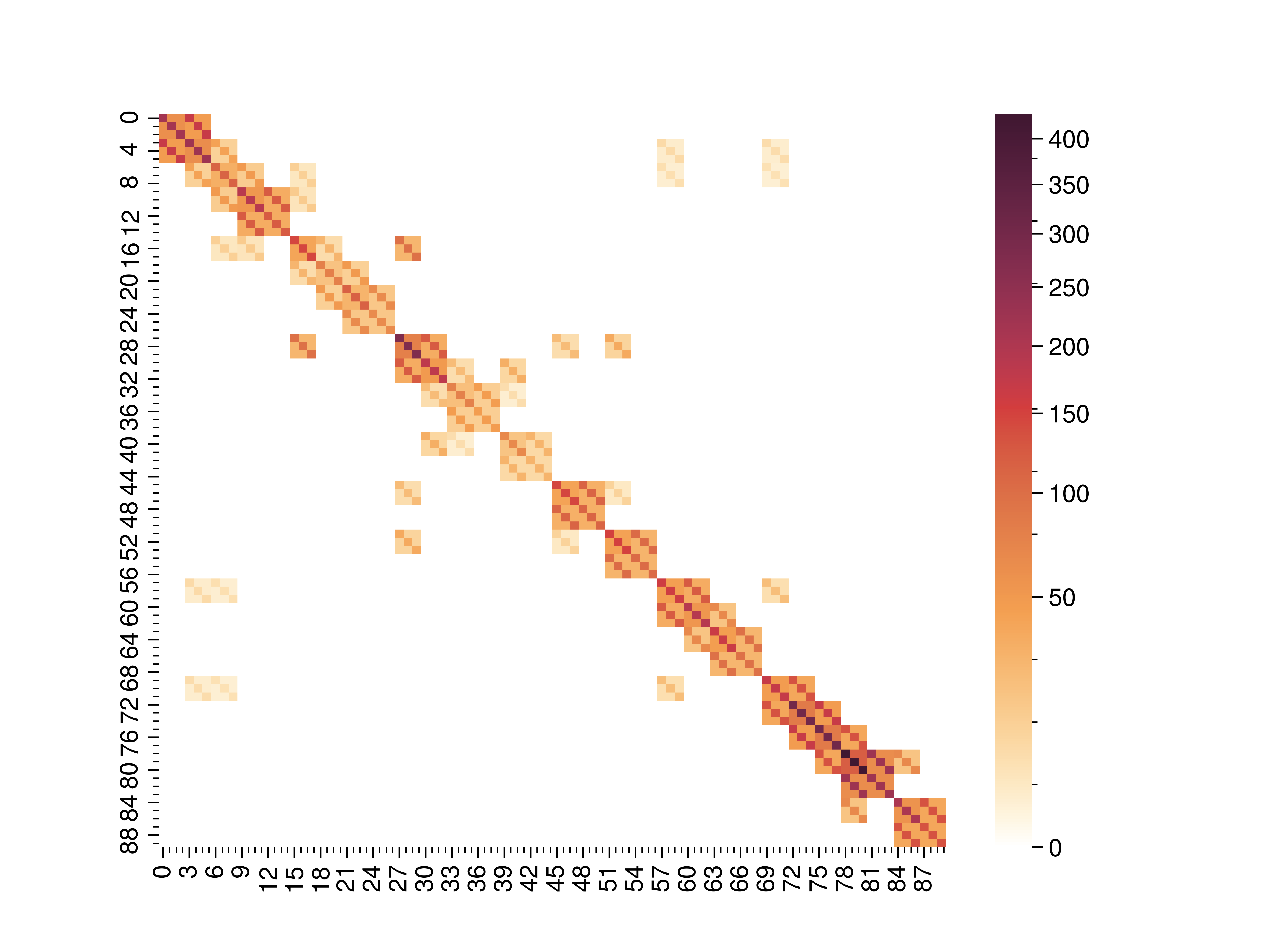}
        \caption{Real admittance matrix.}
        \label{fig:y_heatmap_red}
     \end{subfigure}
    \caption{Heat-maps of real and estimated reduced three-phased admittance matrices.}
    \label{fig:heatmaps_red}
\end{figure}

% \begin{figure}[H]
%     \centering
%     \input{comparison.tex}
%     \caption{Comparison of the magnitude of non-zero elements estimated with different methods.}
%     \label{fig:comparison}
% \end{figure}

\section{Discussion} \label{section_discussion}

\subsection{Comparison with state-of-the-art methods} \label{subsec_comparison}

State-of-the-art approaches to network identification include, besides MLE (or its TLS approximation), other methods such as OLS or Lasso. \cref{fig:robustness} compares these three approaches with MAP using the sparsity-promoting prior described in \cref{section_results}. It shows the relative Frobenius error \eqref{eq_frob_err_def} for various noise level, as well as its standard deviation over 4 different simulations of the reduced network (\cref{subsec_results_reduced}). This figure highlights the low robustness to noise of non-EIV models, as well as the additional robustness provided by the regularization in MAP with physics-based priors.

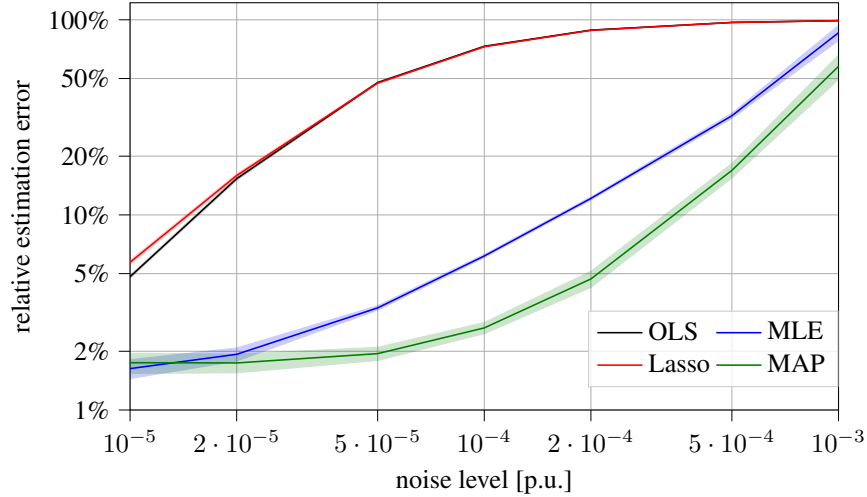
\begin{figure}[H]
    \centering
    % This file was created with tikzplotlib v0.9.15.
\begin{tikzpicture}

\begin{axis}[
legend cell align={left},
legend columns=2,
legend style={fill opacity=0.8, draw opacity=1, text opacity=1, at={(1,0.15)}, anchor=east, draw=white!80!black},
log basis x={10},
log basis y={10},
tick align=outside,
width=11cm,
height=7cm,
tick pos=left,
x grid style={white!69.0196078431373!black},
xmajorgrids,
xmin=1e-05, xmax=0.001,
xmode=log,
xtick style={color=black},
y grid style={white!69.0196078431373!black},
ymajorgrids,
ymin=0.01, ymax=1.22304407455456,
xtick={0.00001,0.00002,0.00005,0.0001,0.0002,0.0005,0.001},
xticklabels={$10^{-5}$,$2\cdot 10^{-5}$, $5\cdot 10^{-5}$, $10^{-4}$, $2\cdot 10^{-4}$, $5\cdot 10^{-4}$, $10^{-3}$},
ymode=log,
ytick style={color=black},
ytick={0.01,0.02,0.05,0.1,0.2,0.5,1},
yticklabels={1\%,2\%,5\%,10\%,20\%,50\%,100\%},
xlabel={noise level [p.u.]},
ylabel={relative estimation error}]
]
\path [draw=black, fill=black, opacity=0.2]
(axis cs:1e-05,0.0489842193570679)
--(axis cs:1e-05,0.0474157806429321)
--(axis cs:2e-05,0.150257670780787)
--(axis cs:5e-05,0.471762255228189)
--(axis cs:0.0001,0.72848141447824)
--(axis cs:0.0002,0.884079015518638)
--(axis cs:0.0005,0.969093798079768)
--(axis cs:0.001,0.989841886116992)
--(axis cs:0.001,0.990158113883008)
--(axis cs:0.001,0.990158113883008)
--(axis cs:0.0005,0.969906201920232)
--(axis cs:0.0002,0.886720984481362)
--(axis cs:0.0001,0.734418585521759)
--(axis cs:5e-05,0.481337744771811)
--(axis cs:2e-05,0.156442329219213)
--(axis cs:1e-05,0.0489842193570679)
--cycle;

\path [draw=blue, fill=blue, opacity=0.2]
(axis cs:1e-05,0.0181138357147217)
--(axis cs:1e-05,0.0144861642852783)
--(axis cs:2e-05,0.0178133931252681)
--(axis cs:5e-05,0.0324907630408086)
--(axis cs:0.0001,0.0604252717712433)
--(axis cs:0.0002,0.119290537823628)
--(axis cs:0.0005,0.312522590272842)
--(axis cs:0.001,0.781395318220831)
--(axis cs:0.001,0.935454681779169)
--(axis cs:0.001,0.935454681779169)
--(axis cs:0.0005,0.331977409727158)
--(axis cs:0.0002,0.123659462176372)
--(axis cs:0.0001,0.0627247282287567)
--(axis cs:5e-05,0.0342592369591914)
--(axis cs:2e-05,0.0207866068747318)
--(axis cs:1e-05,0.0181138357147217)
--cycle;

\path [draw=red, fill=red, opacity=0.2]
(axis cs:1e-05,0.0585284677683639)
--(axis cs:1e-05,0.0558215322316361)
--(axis cs:2e-05,0.1560751000016)
--(axis cs:5e-05,0.469397341177589)
--(axis cs:0.0001,0.724730151519017)
--(axis cs:0.0002,0.881558992781175)
--(axis cs:0.0005,0.968255033785326)
--(axis cs:0.001,0.989496464289287)
--(axis cs:0.001,0.989853535710714)
--(axis cs:0.001,0.989853535710714)
--(axis cs:0.0005,0.968994966214674)
--(axis cs:0.0002,0.884191007218825)
--(axis cs:0.0001,0.730669848480983)
--(axis cs:5e-05,0.478802658822411)
--(axis cs:2e-05,0.1623248999984)
--(axis cs:1e-05,0.0585284677683639)
--cycle;

\path [draw=green!50.1960784313725!black, fill=green!50.1960784313725!black, opacity=0.2]
(axis cs:1e-05,0.0195454769981818)
--(axis cs:1e-05,0.0154045230018182)
--(axis cs:2e-05,0.0155048763060678)
--(axis cs:5e-05,0.0179358996070273)
--(axis cs:0.0001,0.0246114661349286)
--(axis cs:0.0002,0.0424619603388562)
--(axis cs:0.0005,0.154629095035332)
--(axis cs:0.001,0.493550858982598)
--(axis cs:0.001,0.660899141017402)
--(axis cs:0.001,0.660899141017402)
--(axis cs:0.0005,0.183470904964668)
--(axis cs:0.0002,0.0514380396611438)
--(axis cs:0.0001,0.0280885338650714)
--(axis cs:5e-05,0.0209641003929727)
--(axis cs:2e-05,0.0193451236939322)
--(axis cs:1e-05,0.0195454769981818)
--cycle;

\addplot [semithick, black]
table {%
1e-05 0.0482
2e-05 0.15335
5e-05 0.47655
0.0001 0.73145
0.0002 0.8854
0.0005 0.9695
0.001 0.99
};
\addlegendentry{OLS}
\addplot [semithick, blue]
table {%
1e-05 0.0163
2e-05 0.0193
5e-05 0.033375
0.0001 0.061575
0.0002 0.121475
0.0005 0.32225
0.001 0.858425
};
\addlegendentry{MLE}
\addplot [semithick, red]
table {%
1e-05 0.057175
2e-05 0.1592
5e-05 0.4741
0.0001 0.7277
0.0002 0.882875
0.0005 0.968625
0.001 0.989675
};
\addlegendentry{Lasso}
\addplot [semithick, green!50.1960784313725!black]
table {%
1e-05 0.017475
2e-05 0.017425
5e-05 0.01945
0.0001 0.02635
0.0002 0.04695
0.0005 0.16905
0.001 0.577225
};
\addlegendentry{MAP}
\end{axis}

\end{tikzpicture}
    \caption{Comparison of existing methods for various noise levels and 7 days of data.}
    \label{fig:robustness}
\end{figure}

\begin{remark}
Many datasets suffer from incomplete measurements (e.g. data collected using smart meters miss the voltage phase). The EIV model generates estimates of the exact value of both currents and voltages, which could replace large variance pseudo-measurements. However, doing so may greatly reduce the precision of the estimate and require very large sample sizes.
\end{remark}

\subsection{Prior improvement by measuring selected line admittances} \label{subsec_measure}
If one needs to further improve the estimate and the sample size is limited, the remaining solution is to collect further information on topology or parameters of the electric network and integrate the results into the Bayesian prior (\cref{subsec_hard_priors}). %It can be done by replacing the corresponding zero elements in $\mu$ by the measured value, as explained in \cref{subsec_priors}. 
To avoid collecting too much data, the additional information can be focused on estimates with the largest error. Although the error covariance is practically not computable (\cref{subsec_error_cov}), its properties show that in theory one expects the estimation error to be concentrated on (i) high admittance elements such as short lines or switches, (ii) lines connecting a node with a load much lower than their power flow, and (iii) lines with a very low admittance, which may be strongly affected by regularization and estimated as zero.

The first point follows from \cref{cor_covariance_shape}, which implies that a high admittance element also affects the estimation error of the other lines connected to the same node. Point (ii) is equivalent to the remark in \cref{subsec_preprocess}, that if a node has a low load, the error variances of the admittances of all lines connected to it may be very high.% Finally, lines described by point (iii) are not a source of high error in terms of relative Frobenius norm \eqref{eq_frob_err_def}. However, they may have an impact in problems where the topology is very important, if they are erroneously detected as non-existent.

According to these guidelines, we incorporate the knowledge of the lines connected to nodes 1, 50 and 51. Indeed, these lines are short and have a very high power flow due to neighbouring large loads or external grid connection, hence falling in the categories (i) and (ii). In total, we add information about 5 of the 870 possible connections, among which 35 actually exist, and obtain 2.24\% error, which means a 10\% improvement compared to the value without prior knowledge. As a comparison, we add the measurements of 5 random other lines and obtain 2.4\% error, which means that the estimate improves by only 3.6\%. The guidelines (i), (ii), and (iii) can therefore help choosing lines to measure to achieve a better estimate.

% \begin{table}[H]
% \centering
% \begin{tabular}{|c||c|c|c|}
%     \hline
%     Noise & no measure & chosen lines & random lines \\
%     \hline
%     2e-4 & 40.69\% & 29.39\% & 38.97\% \\
%     \hline
%     1.7e-4 & 24.88\% & 20.41\% & 23.83\% \\
%     \hline
%     1.3e-4 & 14.75\% & 12.88\% & 14.13\% \\
%     \hline
%     1e-4 & 10.32\% & 9.06\% & 9.89\% \\
%     \hline
% \end{tabular}
% \caption{MAP estimation error with exact prior for various levels of noise, equal in phase and magnitude for current and voltages.}
% \label{table_noise_prior}
% \end{table}

%In an ideal setting, the covariance of the estimation error would give relevant information about which node is badly estimated as well. However, due to the very low signal, it is almost impossible to compute a reliable approximation without the exact variables.

\section{Conclusions} \label{section_conclu}
The penetration of distributed generation and smart devices in the distribution grid calls for the introduction of advanced control schemes, which require the exact topology and line parameters. Such information is often unavailable for distribution networks: as direct measurement is infeasible, data-driven estimators are needed. 

In this work, we proposed to exploit samples collected by micro-PMUs. Considering a realistic statistical model for the noise affecting both current and voltage measurements, we built maximum-likelihood and Bayesian estimators. We argued that the latter can outperform the former, due to their ability to exploit features of the grid, such as sparsity, as well as available information on specific lines. Our argument is substantiated by numerical simulations on benchmark grids: even without any network-specific prior information, Bayesian methods outperformed state-of-the-art estimators with realistic noise levels.

Further research may focus on effective methods to selectively collect live measurements for improving the quality of estimates, as well as the development of alternative noise models for different sensors and physical quantities such as power. Formally defining the network's observability could lift the limitation to fully observed grids, introducing pseudo-measurements and Bayesian priors on the grid's state. Learning a reduced network connecting only specific nodes could also provide an answer to missing measurements.

%%%%%%%%%%%%%%%%%%%%%%%%%%%%%%%%%%%%%%%%%%%%%%%%%%%%%%%%%%%%%%%%%%

\appendix

\section{Covariance matrix}\label{appendix:covariance}
In order to solve the Maximum Likelihood problem \eqref{unreg_mle_notlog} and all its subsequent refinements, one needs the covariance matrices $\Sigma_v$ and $\Sigma_i$. The construction of the two is identical, thus we will focus on $\Sigma_v = \cov[\Delta \boldsymbol{v}] \in \bb{R}^{2nN \times 2nN}$ only. 

From \cref{subsec_noise_carac}, the $\Sigma_v$ is sparse, having non-zero elements only on three diagonals:
\begin{align}\label{mle_error_covariance}
    \Sigma_v = \begin{pmatrix}
  \matb \mt{Var}[\Re(\tilde V_{11})] &  \multicolumn{2}{c}{\text{\kern0.5em\smash{\raisebox{-1ex}{\Large $\bb 0$}}}} \\ & \ddots & \\  \multicolumn{2}{c}{\text{\kern0.5em\smash{\raisebox{1ex}{\Large $\bb 0$}}}} & \mt{Var}[\Re(\tilde V_{Nn})] \mate 
  & \rvline
  & \matb \mt{Cov}[\Re(\tilde V_{11}),\Im(\tilde V_{11})] &  \multicolumn{2}{c}{\text{\kern0.5em\smash{\raisebox{-1ex}{\Large $\bb 0$}}}} \\ & \ddots& \\  \multicolumn{2}{c}{\text{\kern0.5em\smash{\raisebox{1ex}{\Large $\bb 0$}}}} & \mt{Cov}[\Re(\tilde V_{Nn}),\Im(\tilde V_{Nn})] \mate \\
  \hline
  \star
  &  \rvline
  & \matb \mt{Var}[\Im(\tilde V_{11})] &  \multicolumn{2}{c}{\text{\kern0.5em\smash{\raisebox{-1ex}{\Large $\bb 0$}}}} \\ & \ddots & \\  \multicolumn{2}{c}{\text{\kern0.5em\smash{\raisebox{1ex}{\Large $\bb 0$}}}} &  \mt{Var}[\Im(\tilde V_{Nn})] \mate 
\end{pmatrix},
\end{align}
where $\star$ denotes a symmetric element. The main diagonal of $\Sigma_v$ hosts the variance while the $nN$th super- and sub-diagonals, provide the covariance between the real and the imaginary part of the measurements. Such particular structure makes it possible to split $\Sigma_v$ into four diagonal blocks and makes it easy to find $\Sigma_v^{-1}$ analytically. 

It is also interesting to note that, up to a permutation of the elements in $\Delta \boldsymbol{v}$, $\cov[\Delta \boldsymbol{v}]$ can be written as a block diagonal matrix where the 2-by-2 blocks are given by \eqref{eq:average-true-variance-cov}.

\section{Proof in \cref{subsec_error_cov}}\label{appendix_cov_proof}
\subsection{Proof of \cref{lem_covariance_shape}}

Since $\Phi = \mc I_n \otimes (V - \bb 1_N \bar V)$, one has that $\Phi_{\symi N+\symt}^\top \Phi_{\symi N+\symt}$ is block diagonal for all $\symi,\symt$, with $n$ blocks of size $N$ by $n$. Hence, $F_{\mt{MLE}}$ is a 2-by-2 block matrix, with each block being a block diagonal matrix with $n$ blocks, and its inverse has the same shape. The error covariance matrix $\Sigma_y$ is then written using the compact notation $\hat Y_\symi^\Re + \imj \hat Y_\symi^\Im = (\hat Y^\top)_{\symi,\mt{MLE}}$ as
\begin{align}\label{mle_error_covariance}
    \Sigma_y = \begin{pmatrix}
  \matb \mt{Var}[\hat Y_1^\Re] &  \multicolumn{2}{c}{\text{\kern0.5em\smash{\raisebox{-1ex}{\Large $\bb 0$}}}} \\ & \ddots & \\  \multicolumn{2}{c}{\text{\kern0.5em\smash{\raisebox{1ex}{\Large $\bb 0$}}}} & \mt{Var}[\hat Y_n^\Re] \mate 
  & \rvline
  & \matb \mt{Cov}[\hat Y_1^\Re,\hat Y_1^\Im] &  \multicolumn{2}{c}{\text{\kern0.5em\smash{\raisebox{-1ex}{\Large $\bb 0$}}}} \\ & \ddots& \\  \multicolumn{2}{c}{\text{\kern0.5em\smash{\raisebox{1ex}{\Large $\bb 0$}}}} & \mt{Cov}[\hat Y_n^\Re,\hat Y_n^\Im] \mate \\
  \hline
  \star
  &  \rvline
  & \matb \mt{Var}[\hat Y_1^\Im] &  \multicolumn{2}{c}{\text{\kern0.5em\smash{\raisebox{-1ex}{\Large $\bb 0$}}}} \\ & \ddots & \\  \multicolumn{2}{c}{\text{\kern0.5em\smash{\raisebox{1ex}{\Large $\bb 0$}}}} &  \mt{Var}[\hat Y_n^\Im] \mate 
\end{pmatrix} .
\end{align}

The zeros introduced by the Kronecker product for constructing $\Phi$ are not random variables (i.e. their variance and covariance is zero). Together with assumption \ref{ass_independence_noise}, this yields
\begin{align}\label{fischer_mle_covariance}
    \mc R_{\Re,1\symi\symt} = \begin{pmatrix}
  \matb [ \bb 0_{n(\symi-1)} ] &\hspace{-6px} \\ &\hspace{-6px} \mt{Var}[ \Re(\tilde V_\symt) ] \mate 
  & \hspace{-10px} \text{\Large $\bb 0$} \\
  \text{\Large $\bb 0$} & \hspace{-10px} 
  \matb [ \bb 0_{n(n-\symi)} ] &\hspace{-6px} \\ &\hspace{-6px} \mt{Var}[\Re( \tilde I_{\symi\symt})] \mate
\end{pmatrix},
%   [ \bb 0_{n(\symi-1)} ] && \multicolumn{2}{c}{\text{\kern0.5em\smash{\raisebox{-1ex}{\Large $\bb 0$}}}} \\
%   & \mt{Var}[ \Re(V_\symt) ] &&& \\
%   && [ \bb 0_{n(n-\symi)} ] & \\
%   \multicolumn{2}{c}{\text{\kern0.5em\smash{\raisebox{1ex}{\Large $\bb 0$}}}} && \mt{Var}[\Re(I_{\symi\symt})]
% \end{pmatrix},
\end{align}
where $\mt{Var}[\Re(\tilde{V}_\symt)]$ is diagonal from \cref{ass_independence_noise}. Thus, $\mc R_{\Re,1\symi\symt}$ is also diagonal. $\mc R_{\Re,2\symi\symt}$ has the same expression as \eqref{fischer_mle_covariance}, but with $\Im( \tilde I_{\symi\symt})$ replacing $\Re( \tilde I_{\symi\symt})$. Expressions of the same form can be derived using $\mt{Cov}[\Re(\tilde V_\symt), \Im(\tilde V_\symt)]$ and $\mt{Var}[ \Im(\tilde V_\symt)]$ for $\mc R_{\Re\Im,q\symi\symt}$ and $\mc R_{\Im,q\symi\symt}$, respectively. With $q \in \{1,2\}$ and $\mc Q_{q\symi\symt}$ such that $\mc Q_{1\symi\symt} = \mt{Var}[\Re( \tilde I_{\symi\symt})]$ and $\mc Q_{2\symi\symt} = \mt{Var}[\Im( \tilde I_{\symi\symt})]$, we define
\begin{subequations}\label{fisher_D_def}
\begin{align}
    \mc D_{\Re,q\symi\symt}(Y_{\cdot\symi}) &= \Re(Y_{\cdot\symi})^\top \mt{Var}[ \Re(\tilde V_\symt) ] \Re(Y_{\cdot\symi}) + \mc Q_{q\symi\symt}, \\
    \mc D_{\Re\Im,q\symi\symt}(Y_{\cdot\symi}) &= \Re(Y_{\cdot\symi})^\top \mt{Cov}[ \Re(\tilde V_\symt),\Im(\tilde V_\symt) ] \Im(Y_{\cdot\symi}) + \mc Q_{q\symi\symt}, \\
    \mc D_{\Im,q\symi\symt}(Y_{\cdot\symi}) &= \Im(Y_{\cdot\symi})^\top \mt{Var}[ \Im(\tilde V_\symt) ] \Im(Y_{\cdot\symi}) + \mc Q_{q\symi\symt},
\end{align}
\end{subequations}
Excluding the zeros in \eqref{fischer_mle_covariance} yields $D_{\Re,1\symi\symt}(Y_\symi) = \Re(z)^\top \mc R_{\Re,1\symi\symt} \Re(z)$, and similarly for $D_{Q,q\symi\symt}(Y_\symi)$ with any $Q \in \{\Re,\Re\Im,\Im\}$ and $q \in \{1,2\}$.

Moreover, $\Phi_{\symi N+\symt}^\top \Phi_{\symi N+\symt}$ also has the same sparsity pattern as the $\mc R$ matrices. Therefore, for any $\symk$, the $\symk^{th}$ $n$-by-$n$ block of $F_{\mt{MLE}}$ is only nonzero for the terms of the sum \eqref{fischer_mle_unreg} where $\symi = \symk$. This means that
\begin{align}\label{mle_error_subcovariance}
    \mt{Var}\!\lt[\matb \hat Y_\symi^\Re \\ \hat Y_\symi^\Im \mate\rt] \!=\! \lt(\!\sum_{q,\symt=1}^{2,N}
    \matb \frac{\Re(\Phi_{\symi N+\symt})^\top \Re(\Phi_{\symi N+\symt})}{D_{\Re,q\symi\symt}(Y_{\cdot\symi})} &\hspace{-5px}
    \frac{\Re(\Phi_{\symi N+\symt})^\top \Im(\Phi_{\symi N+\symt})}{D_{\Re\Im,q\symi\symt}(Y_{\cdot\symi})} \\ \star &\hspace{-5px}
    \frac{\Im(\Phi_{\symi N+\symt})^\top \Im(\Phi_{\symi N+\symt})}{D_{\Im,q\symi\symt}(Y_{\cdot\symi})} \mate\rt)^{\hspace{-5px}-1} \hspace{-8px}.
\end{align}
Equations \eqref{mle_error_covariance} and \eqref{mle_error_subcovariance} respectively show not only that the columns $(\hat{Y}^\top)_\symi$ of $\hat Y_{\mt{MLE}}$ are statistically independent, but also that their variance does not depend on the exact values of one another, which finishes the proof.

%SCHUR COMPLEMENT can prove increasing with \|y\| if |y^T cov(R,I) y| ≥ Qpit (i.e. for a large y)

% Combining \eqref{fischer_mle_covariance_bound} and \eqref{mle_error_subcovariance}, together with the fact that all three $\|\Re(x)\|^2$, $\|\Re(x)\|\|\Im(x)\|$ and $\|\Im(x)\|^2$ are greater than $\min(\|\Re(x)\|^2, \|\Im(x)\|^2)$ for any $x \in \bb C$, we show that $\mt{Var}[Y_{i,\mt{MLE}}]$ grows with $\min(\|\tilde Y_i^\Re\|^2, \|\tilde Y_i^\Im\|^2)$, which finishes the proof.

\subsection{Proof of \cref{cor_covariance_shape}}
Let $Y_{\cdot\symi}^1$ and $Y_{\cdot\symi}^2$ be such that $|\Re(Y_{\cdot\symi}^2)| \geq |\Re(Y_{\cdot\symi}^1)|$ and $|\Im(Y_{\cdot\symi}^2)| \geq |\Im(Y_{\cdot\symi}^1)|$ element-wise. From \eqref{fisher_D_def} and assuming $\mt{Cov}[\Re(V_\symt),\Im(V_\symt)] = 0$, it follows that, for all $q$ and $\symt$,
\begin{subequations} \label{fisher_D_ineq}
\begin{align}
    \mc D_{\Re,q\symi\symt}(Y_{\cdot\symi}^2) &\geq  D_{\Re,pit}(Y_{\cdot\symi}^1), \\
    \mc D_{\Re\Im,q\symi\symt}(Y_{\cdot\symi}^2) &= \mc D_{\Re\Im,q\symi\symt}(Y_{\cdot\symi}^1), \\
    \mc D_{\Im,q\symi\symt}(Y_{\cdot\symi}^2) &\geq \mc D_{\Im,q\symi\symt}(Y_{\cdot\symi}^1),
\end{align}
\end{subequations}
because $\mt{Var}[ \Re(\tilde V_\symt) ]$ and $\mt{Var}[ \Im(\tilde V_\symt) ]$ are positive diagonal matrices. Using \eqref{mle_error_subcovariance}, we then write
\begin{align}\label{mle_error_subcovariance_diff}
    &\lt(\mt{Var}\!\lt[\matb \hat Y_\symi^{2,\Re} \\ \hat Y_\symi^{2,\Im} \mate\rt]\rt)^{-1} - \lt(\mt{Var}\!\lt[\matb \hat Y_\symi^{1,\Re} \\ \hat Y_\symi^{1,\Im} \mate\rt]\rt)^{-1} =\\ 
    &\quad \sum_{q,t=1}^{2,N}
    \lt(\matb \alpha^\Re \Re(\Phi_{\symi N+\symt})^\top \Re(\Phi_{\symi N+\symt}) &\hspace{-8px}
    0 \\ 0 &\hspace{-8px}
    \alpha^\Im \Im(\Phi_{\symi N+\symt})^\top \Im(\Phi_{\symi N+\symt}) \mate\rt) \hspace{-3px}, \nonumber
\end{align}
with $\alpha^\Re = D_{\Re,q\symi\symt}(Y_{\cdot\symi}^2)^{-1} - D_{\Re,q\symi\symt}(Y_{\cdot\symi}^1)^{-1}$ and $\alpha^\Im = D_{\Im,q\symi\symt}(Y_{\cdot\symi}^2)^{-1} - D_{\Im,pit}(Y_{\cdot\symi}^1)^{-1}$. The inequalities in \eqref{fisher_D_ineq} show that both $\alpha^\Re$ and $\alpha^\Im$ are non-positive. Hence, the blocks of \eqref{mle_error_subcovariance_diff} are the product of a negative scalar and a quadratic form and are therefore negative semi-definite. From this observation, it follows that
\begin{align}\label{mle_error_subcovariance_ineq}
    &\mt{Var}\!\lt[\matb \hat Y_\symi^{2,\Re} \\ \hat Y_\symi^{2,\Im} \mate\rt] \succeq \mt{Var}\!\lt[\matb \hat Y_\symi^{1,\Re} \\ \hat Y_\symi^{1,\Im} \mate\rt],
\end{align}
which finishes the proof.

\section{Non diagonal Bayesian prior}\label{appendix_nondiag}
The goal of the regularization \eqref{eq_prior_nondiag} of diagonal elements of $\hat Y$ is to reduce the bias from a sparsity-promoting prior. This means that the diagonal elements should keep a value close to the one estimated with MLE. For all $\symi$ and with the structural prior described in \cref{subsec_structural}, this means
\begin{align*}
    \hat Y_{\mt{MLE},\symi\symi} &= \frac{1}{2} \bb 1_{n^2}^\top \vect\lt( \begin{array}{c|c|c} \bb 0 & \hat Y_{\symi,:\symi-1}^\top & \bb 0 \\ \hline \vspace{-11px} \\
    \hat Y_{\symi,:\symi-1} & 0 & \hat Y_{\symi,\symi+1:} \\ \hline \vspace{-11px} \\
    \bb 0 & \hat Y_{\symi,\symi+1:}^\top & \bb 0 \end{array} \rt), \\
    &= [1, \imj] \otimes \lt(\ve(e_\symi \bb 1_n^\top + \bb 1_n e_\symi^\top)\rt)^\top \hat y_r,
\end{align*}
where $[\hat Y_{\symi,:\symi-1}, \hat Y_{\symi\symi}, \hat Y_{\symi,\symi+1:}] = \hat Y_\symi$ is the $\symi$th row of $\hat Y$. The non-diagonal prior is then given by $\mu_{\mt{nd}} = \bb 1_n$ and $L_{\mt{nd}}$ defined as:
\begin{align*}
    L_{\mt{nd}} &= \frac{\lambda'}{\lambda}\! \lt[\lt(\matb \Re(\hat Y_{\mt{MLE},\symi\symi}^{-1}) & 0 \\ \hspace{-8px} 0 & \hspace{-8px} \Im(\hat Y_{\mt{MLE},\symi\symi}^{-1}) \mate \rt) \!\otimes \ve(e_\symi \bb 1_n^\top + \bb 1_n e_\symi^\top)\rt]_{\symi=1}^n \hspace{-5px}.
\end{align*}

\section{Noise bias in Cartesian coordinates} \label{appendix_bias}
In \cref{subsec_noise_carac} we stated that the noise bias \cref{eq:average-true-bias} is negligible with realistic levels of accuracy. In order to show this, we adapt the procedure in \cite[Sec III.A]{lerro1993tracking}. By using the first-order Taylor expansion of \cref{eq:average-true-bias} about $\sigma_\delta = 0$, we get:
\begin{subequations}
    \begin{align}
        \E[\Delta c | \Tilde{v}, \Tilde{\theta}] &\simeq - (\Tilde{v}\sigma^2_\delta/2)\cos\Tilde{\theta},\\
        \E[\Delta d | \Tilde{v}, \Tilde{\theta}] &\simeq - (\Tilde{v}\sigma^2_\delta/2)\sin\Tilde{\theta}.
    \end{align}
\end{subequations}
Then, 
\begin{equation}
    \lVert [\E[\Delta c | \Tilde{v}, \Tilde{\theta}], \E[\Delta d | \Tilde{v}, \Tilde{\theta}]] \rVert = \Tilde{v}\sigma^2_\delta/2.
\end{equation}
Moreover, the minimum eigenvalue $\lambda_{\min}$ of the covariance matrix \eqref{eq:average-true-variance} is $\min(\sigma^2_\epsilon, \Tilde{v}^2\sigma^2_\delta)$: therefore, the minimum standard deviation in the covariance matrix is $\sigma_{\text{min}} = \sqrt{\lambda_\text{min}} = \sqrt{\min(\sigma^2_\epsilon, \Tilde{v}^2\sigma^2_\delta)}$.

The bias can be considered non-significant if $\lVert [\E[\Delta c | \Tilde{v}, \Tilde{\theta}], \E[\Delta d | \Tilde{v}, \Tilde{\theta}]] \rVert/\sigma_\text{min}$ is small. Adopting the per-unit system, and using realistic $\mu$PMU accuracy specifications (\cref{tab:pmu-error}), we obtain:
\begin{equation}
\label{eq:noise-bias-significance}
    \frac{\lVert [\E[\Delta c | \Tilde{v}, \Tilde{\theta}], \E[\Delta d | \Tilde{v}, \Tilde{\theta}]] \rVert}{\sigma_\text{min}} \simeq 8.72\cdot10^{-5}.
\end{equation}
Thus, the bias is four orders of magnitude smaller than the smallest standard deviation from the noise covariance matrix and therefore can be safely neglected.

\bibliographystyle{unsrt}
\bibliography{references}

\end{document}